\documentclass[]{article}

\usepackage[utf8]{inputenc}
\usepackage[hidelinks]{hyperref}
\usepackage{url}
\usepackage{amsfonts}
\usepackage{microtype}
\usepackage{xcolor}
\usepackage{times}
\usepackage{authblk}
\usepackage{amsmath}
\usepackage{amssymb}
\usepackage{amsthm}
\usepackage{algorithm}
\usepackage{algorithmic}
\usepackage[round]{natbib}
\usepackage{tikz}

\usepackage{geometry}
\geometry{margin=1.5in}

\theoremstyle{plain}
\newtheorem{theorem}{Theorem}[section]

\newtheorem{lemma}[theorem]{Lemma}
\newtheorem{corollary}[theorem]{Corollary}
\newtheorem{fact}[theorem]{Fact}
\theoremstyle{definition}
\newtheorem{definition}[theorem]{Definition}

\theoremstyle{remark}
\newtheorem{remark}[theorem]{Remark}

\usepackage{tikz}
\usetikzlibrary{matrix}

\newcommand{\bernoulli}{\mathrm{Bernoulli}}
\newcommand{\uniform}{\mathrm{Uniform}}
\newcommand{\geometric}{\mathrm{Geometric}}

\newcommand{\multinomial}{\mathrm{Multinomial}}
\newcommand{\poisson}{\mathrm{Poisson}}
\newcommand{\eps}{\varepsilon}
\renewcommand{\epsilon}{\eps}
\newcommand{\E}{\mathbb{E}}
\newcommand{\M}{\mathcal M}
\newcommand{\F}{\mathcal F}
\newcommand{\R}{\mathbb R}

\newcommand{\X}{\mathcal X}
\newcommand{\scal}{\mathcal S}
\newcommand{\ical}{\mathcal I}
\newcommand{\fn}[1]{\texttt{#1}}
\newcommand{\defn}{\ensuremath{:=}}
\newcommand{\lxor}{\veebar} 
\newcommand{\Mxor}{\M^{\operatorname{xor}}}
\newcommand{\Msym}{\M^{\operatorname{sym}}}
\newcommand{\Mps}{\M^{\operatorname{PS}}}

\DeclareMathOperator{\merge}{merge}
\DeclareMathOperator{\Var}{Var}

\newcommand{\ussym}{\textit{SFM (sym)}}
\newcommand{\usxor}{\textit{SFM (xor)}}
\newcommand{\psloose}{\textit{PS (loose)}}
\newcommand{\pstight}{\textit{PS (tight)}}

\newcommand{\estse}{\widehat{\mathrm{SE}}}

\title{Sketch-Flip-Merge: Mergeable Sketches for Private Distinct Counting}

\author[1]{Jonathan Hehir\footnote{This research was conducted while the author was at Meta.}}
\author[2]{Daniel Ting}
\author[3]{Graham Cormode}
\affil[1]{Department of Statistics, Pennsylvania State University, USA}
\affil[2]{Meta, USA}
\affil[3]{Meta, UK}
\date{\today}

\begin{document}

\maketitle

\begin{abstract}
Data sketching is a critical tool for distinct counting, enabling multisets to be represented by compact summaries that admit fast cardinality estimates. Because sketches may be merged to summarize multiset unions, they are a basic building block in data warehouses. Although many practical sketches for cardinality estimation exist, none provide privacy when merging. We propose the first practical cardinality sketches that are simultaneously mergeable, differentially private (DP), and have low empirical errors. These introduce a novel randomized algorithm for performing logical operations on noisy bits, a tight privacy analysis, and provably optimal estimation. Our sketches dramatically outperform existing theoretical solutions in simulations and on real-world data.
\end{abstract}

\section{Introduction}
Many applications that model large volumes of data are based on tracking cardinalities of events or observations.   
Consequently, these applications make extensive use of data sketches that support fast, approximate cardinality estimation~\citep{cormode2020small}. 
For instance, approximate distinct counting is supported via variants of the HyperLogLog (HLL) sketch \citep{flajolet2007hyperloglog,heule2013hyperloglog} in popular data management systems including Amazon Redshift, ClickHouse, Google BigQuery, Splunk, Presto, Redis, and more. 
At the expense of a small estimation error, these approximate methods drastically reduce the computational cost of distinct counting to run in linear time, using only bounded memory. 
An additional key feature of distinct-count sketches is the ability to merge two or more sketches to obtain cardinality estimates over their union. This enables not only distributed computation, but also many rich aggregation possibilities from previously computed sketches.
As a result, modern data pipelines rely extensively on the performance and functionality of such cardinality sketches. 

Increasingly, privacy concerns constrain the operation of data processing. 
Organizations demonstrating commitments to preserving users' privacy require that data collected from individuals be subject to appropriate mitigations before being passed to downstream processing. 
Specifically, protections such as differential privacy are used to protect sensitive data while still giving accurate query response. 

Although sketching techniques may appear to offer protection by reducing data, it is well-known that sketching alone does not automatically provide a privacy guarantee \citep{desfontaines2018cardinality}. 
The summaries---or even the estimates calculated from them---can leak considerable information about whether the specific items  belong to the underlying set.
Recently, it has been shown that the contents of sketches do meet a privacy standard \textit{if} the associated hash functions are not known to the observer \citep{choi2020differentially,smith2020flajolet,dickens2022nearly}.
However, it is not plausible to assume secret hash functions when the computation is shared among multiple entities in a large scale system. 
In particular, all participants must know the hash when working with sketches that will be merged, and using the same hash in multiple sketches generates correlated randomness that breaks the privacy guarantees.
This creates an important gap to make these high-throughput systems private. 
Previous attempts to construct privacy-preserving sketches \citep{pagh2020efficient}  do not offer practical mergeable sketches as the errors are too large (Section \ref{sec:evaluation}).

In this work, we present the Sketch-Flip-Merge (SFM) summaries, a practical, mergeable, and provably private approach to distinct-count sketching. 
In particular, we produce summaries that satisfy the strong definition of $\eps$--differential privacy (DP) \citep{dwork2006calibrating,dwork2008differential} even when the hash function is known publicly. 
By attaching the privacy guarantee to the summary itself---not just the cardinality estimate---we may safely release summaries corresponding to sensitive multisets, enabling safe cardinality estimation over any union of such sets using the privacy-preserving summaries in lieu of the original sensitive data.

The key to our approach is to adapt the sketch of \citet{flajolet1985probabilistic}, which is often referred to as either \textit{FM85} or \textit{probabilistic counting with stochastic averaging (PCSA)}. 
Although subsequent sketches such as HLL \citep{durand2003loglog, flajolet2007hyperloglog, heule2013hyperloglog}
further optimized the space usage,
squeezing the space makes them less amenable to privacy protection.
In contrast to PCSA where the simple, partitioned binary structure limits the sensitivity to a bit flip, these sketches store extremal hash values where
small changes to the input can cause big changes in the summary, requiring more noise and yielding less accurate results. Furthermore, our methods generalize to any bitmap based sketch.

\textbf{Related Work.} Privacy-preserving cardinality sketches have been the subject of several earlier works. 
While recent efforts provide DP guarantees for HLL-like sketches \citep{smith2020flajolet,dickens2022nearly}, they rely on random, secret hash functions that preclude the ability to merge sketches. 
Using a fixed, public hash, 
\citet{choi2020differentially} obtain a DP cardinality estimate from a LogLog sketch by adding noise to the cardinality estimator, but the sketch itself remains sensitive and unsafe for release or sharing. 
\citet{stanojevic2017distributed} design a DP algorithm for obtaining cardinality estimation on the union of two multisets using perturbed Bloom filters, but their method does not generalize and scale to the union of more than two multisets.

One line of work extends PCSA with randomized response and subsampling of items to achieve privacy \citep{tschorsch2013algorithm,nunez2019rrtxfm}. 
However, \citet{tschorsch2013algorithm} fails to achieve a DP guarantee, and \citet{nunez2019rrtxfm} does not address merging sketches.
\citet{kreuter2020privacy} design two sketches, including one based on PCSA. While their DP sketches cannot be merged to form a single sketch, multiple sketches may be used to estimate the union's cardinality if all sketches use the same privacy parameters.
The PCSA-based sketch of \citet[][Section~6]{pagh2020efficient} achieves DP and supports merging but is impractical. In our experiments, their estimator frequently 
failed to produce an estimate and returned impractically large errors.
Finally, \citet{desfontaines2018cardinality} 
give an impossibility result where both privacy and high accuracy are impossible, but only when many sketches are merged, which is consistent with our results.

\textbf{Contributions.} We propose two practical methods for constructing \textit{mergeable} DP cardinality sketches and obtaining cardinality estimates. The first uses a deterministic bit-merging operation used by \citet{pagh2020efficient}. We prove this merge requires a suboptimal form of randomized response, even after exponential improvement to the prior privacy analysis (Corollary~\ref{cor:equivalent-eps}).
Our main methodological contribution is a novel randomized merge allowing for up to a further 75\% variance reduction over the optimized deterministic merge. 
We generalize our randomized merge to perform to arbitrary bitwise operations on binary data that may be of independent interest. 
We also develop a composite likelihood-based estimator for cardinality and prove this estimator is asymptotically optimal for both private and non-private sketches based on PCSA.

\textbf{Outline.} We give a brief overview of PCSA sketching in Section~\ref{sec:background}, then define privacy and recap randomized response in Section~\ref{sec:privacy}. 
Merging sketches is enabled through the careful design of randomized response mechanisms and merge operations over collections of randomized bits in Section~\ref{sec:merging}. 
In Section~\ref{sec:sketches}, we propose a fast cardinality estimator for the private PCSA sketch and analyze its properties. 
We compare these methods with private and non-private alternatives in Section~\ref{sec:evaluation} and state conclusions in Section~\ref{sec:discussion}. All proofs are deferred to the appendices.

\textbf{Notation.} We write $[m] = \{1, \dots, m\}$. 
$\otimes$ denotes the Kronecker product. Logical operations are denoted $\lor$ (or), $\land$ (and), $\lxor$ (xor), and $\lnot$ (not). We use the natural logarithm $\log = \log_e$. Equality in distribution is denoted $\overset D=$.
The cardinality of a set $D$ is denoted $|D|$.

\section{Background and Problem Setup}
\label{sec:background}

Let  $D \in \X^N$ denote a multiset of $N$ items from some universe $\X$ of objects. The \textit{count-distinct problem} is the task of estimating the number of unique elements in $D$. 
That is, if $\fn{set}(D)$ denotes the support set of items in $D$, the count-distinct problem aims to approximate $n = |\fn{set}(D)|$ with a data sketch in bounded memory in a single pass over the data.
We consider the \emph{private} count-distinct problem for \emph{mergeable} sketches where the information in a sketch satisfies differential privacy (DP) and sketches can be merged to obtain a sketch of the union of underlying datasets.

We focus on solutions to the count-distinct problem in which sketches form a binary vector, subject to merge operations performed through element-wise logical operations (e.g., \textit{or}).
The class of sketches to which our methods apply include PCSA, linear counting \citep{whang1990linear}, Bloom filters \citep{broder2004network}, and Liquid Legions \citep{kreuter2020privacy}. These are particularly amenable to privacy enhancement through the application of randomized response \citep{warner1965randomized} but require careful design of merge operations for randomized bits. 
Although this excludes other commonly used sketches such as HyperLogLog and the $k$-minimum value sketch \citep{bar2002counting,giroire2009order}, the richer set of values stored in these sketches make them less suitable for privatization due to their high sensitivity and, hence, higher noise required for privacy.
In the remainder of this paper, we focus on the PCSA sketch of \citet{flajolet1985probabilistic}, noting that the results for constructing and merging private sketches in Sections~\ref{sec:privacy} and \ref{sec:merging} apply to related sketches through direct application or simple extensions.

The classical PCSA sketch takes the form of a matrix $S = \scal(D) \in \{0, 1\}^{B \times P}$ with $B$ buckets and precision parameter $P$. Given two independent, universal hash functions,
$h_1(x) \sim \uniform([B]), h_2(x) \sim \geometric(1/2)$, let $\fn{bucket}(x) = h_1(x), \fn{value}(x) = \min \{ P, h_2(x)\}$. Then each bit $S_{ij}$ is equal to 1 iff there exists $x \in D$ such that $\fn{bucket}(x) = i, \fn{value}(x) = j$. Some desirable properties of $S$ are immediate. First, $S$ relies only on the set of hashed values $\{ h_1(x) \}_{x \in D}$ and $\{ h_2(x) \}_{x \in D}$. 
Hence, it is invariant both to repetitions in $D$ and to the order in which the elements of $D$ are processed. 
Additionally, two sketches $\scal(D_1)$ and $\scal(D_2)$ may be merged via a simple bitwise-\textit{or}, $\scal(D_1) \lor \scal(D_2) = \scal(D_1 \cup D_2)$, as each entry $S_{ij}$ in the merged sketch is equal to $1$ iff there is an item $x$ in at least one of $D_1, D_2$ for which $\fn{bucket}(x) = i, \fn{value}(x) = j$.

Importantly, when an adversary knows $h_1$ and $h_2$, the sketch $\scal(D)$ reveals information about elements in $D$. 
For example, any $x \in \X$ for which $S_{\fn{bucket}(x), \fn{value}(x)} = 0$ cannot belong to $D$. 
In what follows, we extend the PCSA sketch to minimize this sort of privacy leakage.

\section{Private Sketches}
\label{sec:privacy}

Differential privacy (DP) \citep{dwork2006calibrating,dwork2008differential} offers a strong and quantifiable notion of privacy. DP mandates that algorithms (\textit{privacy mechanisms}) acting on a dataset $D$ must be randomized---typically through the addition of some carefully tuned noise---so that the distribution of a privacy mechanism's output cannot be significantly influenced by a single input record. As a result, the ability to reverse-engineer information about a single record is limited, and any analysis performed using only the output of the algorithm also satisfies DP. The strength of the DP guarantee is quantified by the parameter $\eps > 0$, often called the privacy budget, with smaller $\eps$ offering stronger privacy. 
\begin{definition}[\citet{dwork2006calibrating}]
\label{def:dp}
A randomized algorithm $\M$ is said to satisfy \textbf{$\eps$--differential privacy (DP)} if for any two neighboring databases $D, D'$ and any set of outputs $E \subseteq \mathrm{Range}(\M)$, we have:
\[
\Pr(\M(D) \in E) \leq e^\eps \, \Pr(\M(D') \in E) .
\]
\end{definition}
In the count-distinct problem, we say two multisets $D, D'$ are neighbors if $D'$ can be obtained by adding or removing one unique item to $D$.
Given two neighboring multisets $D, D'$ and their corresponding PCSA sketches $\scal(D), \scal(D')$, it follows from the definition of PCSA that these sketches must agree on all but at most one bit. 
To create DP sketches from $\scal(D)$, then, we consider general DP mechanisms applied to vectors of $\{0, 1\}$ bits, where two vectors $x, x' \in \{0, 1\}^d$ neighbor if they differ on at most one bit (i.e., have \textit{sensitivity} 1).

When restricting our attention to mechanisms whose input and output are both a single bit, every DP mechanism can be viewed as an instance of randomized response (RR) \cite{warner1965randomized}.
We describe a generalized form of RR as follows. 
Let $\F_{p,q}$ denote a general bit-flipping algorithm, parameterized by two probabilities $p$ and $q$, and where classical RR (in the style of Warner) is recovered when $q=1-p$:
\[
\F_{p,q}(x) \sim \begin{cases}
    \bernoulli(p), & x = 1 \\
    \bernoulli(q), & x = 0
\end{cases} .
\]

\begin{theorem}
\label{thm:dp-bit-flip}
Assume $q \leq 1/2 \leq p$. Applied to vectors with sensitivity 1, the algorithm $\M_{p, q} : \{0, 1\}^d \to \{0, 1\}^d$ that independently applies $\F_{p, q}$ to each element of its input is $\eps$-DP if and only if:
\begin{equation}
\label{eq:dp-pq-constraints}
p, q \in (0, 1) \quad \text{and} \quad \max \left\{ \frac{p}{q}, \frac{1-q}{1-p} \right\} \leq e^\eps .
\end{equation}
\end{theorem}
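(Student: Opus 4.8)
The plan is to reduce the $\eps$-DP condition for the product mechanism $\M_{p,q}$ to a pointwise likelihood-ratio bound, then show that the worst case is achieved by a single flipped bit and by extremal outputs. Since $\F_{p,q}$ is applied independently to each coordinate, for any input $x$ and output $y$ the likelihood $\Pr(\M_{p,q}(x) = y)$ factorizes over coordinates. For neighboring $x, x'$ differing only in coordinate $k$, all factors except the $k$-th cancel in the ratio $\Pr(\M_{p,q}(x)=y)/\Pr(\M_{p,q}(x')=y)$, so the ratio equals a single-coordinate ratio $\Pr(\F_{p,q}(x_k)=y_k)/\Pr(\F_{p,q}(x'_k)=y_k)$ with $\{x_k, x'_k\} = \{0,1\}$. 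Hence $\M_{p,q}$ is $\eps$-DP if and only if this single-bit ratio is bounded by $e^\eps$ for both orderings of the differing bit and both possible output values $y_k \in \{0,1\}$.

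Next I would enumerate the four resulting ratios. Writing out $\F_{p,q}$, the relevant quantities are: input bit $1 \to$ output $1$ has probability $p$, input bit $0 \to$ output $1$ has probability $q$, input bit $1 \to$ output $0$ has probability $1-p$, and input bit $0 \to$ output $0$ has probability $1-q$. The four ratios to bound are therefore $p/q$, $q/p$, $(1-q)/(1-p)$, and $(1-p)/(1-q)$. Under the standing assumption $q \le 1/2 \le p$, we have $p \ge q$ so $p/q \ge 1 \ge q/p$, and $1-q \ge 1/2 \ge 1-p$ so $(1-q)/(1-p) \ge 1 \ge (1-p)/(1-q)$; thus the two ``small'' ratios are automatically at most $1 \le e^\eps$ (using $\eps > 0$), and the binding constraints are exactly $p/q \le e^\eps$ and $(1-q)/(1-p) \le e^\eps$, i.e.\ $\max\{p/q, (1-q)/(1-p)\} \le e^\eps$. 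The requirement $p, q \in (0,1)$ is needed just to ensure all four probabilities are strictly positive, so that the ratios are well-defined and finite; if either were $0$ or $1$, some output would have probability $0$ under one input but positive probability under the neighbor, violating $\eps$-DP for any finite $\eps$. This handles the ``if'' direction (the constraints suffice) and the ``only if'' direction (taking $d \ge 1$, a single-coordinate neighbor pair, and the extremal output realizes each ratio, so each must be $\le e^\eps$).

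The main obstacle, though a mild one, is making the ``only if'' direction airtight: one must exhibit, for each of the two binding ratios, an actual neighboring pair $x, x'$ and an output set $E$ on which $\Pr(\M_{p,q}(x) \in E) / \Pr(\M_{p,q}(x') \in E)$ equals that ratio. Taking $E = \{y\}$ a singleton with $y_k$ chosen as $1$ (to force the $p/q$ ratio when $x_k=1, x'_k=0$) or $0$ (to force $(1-q)/(1-p)$), and the other coordinates of $y$ arbitrary, does this cleanly because of the factorization. I should also double-check the edge behavior of the inequalities at $\eps$ small: since $\eps>0$ gives $e^\eps>1$, the automatically-satisfied ratios being $\le 1$ is genuinely fine, and the constraint $\max\{p/q,(1-q)/(1-p)\}\le e^\eps$ is exactly what remains. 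Finally, I would note that $\F_{p,q}$ being well-defined as a Bernoulli requires $p,q\in[0,1]$, and the open-interval refinement to $(0,1)$ is forced precisely by DP finiteness, tying the two stated conditions in \eqref{eq:dp-pq-constraints} together.
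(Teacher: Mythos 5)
Your proposal is correct and follows essentially the same route as the paper: factorize over coordinates to reduce to a single-bit likelihood ratio, then bound the resulting ratios, with $p,q\in(0,1)$ forced by finiteness. Your explicit use of $q\le 1/2\le p$ to discard the two non-binding ratios $q/p$ and $(1-p)/(1-q)$ is a minor elaboration the paper leaves implicit, not a different argument.
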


Theorem~\ref{thm:dp-bit-flip} provides an entire family of privacy mechanisms, any $\M_{p,q}$ with  $p, q$ satisfying Eq. \ref{eq:dp-pq-constraints}, that can be applied to a PCSA sketch or any bit vector to make its output $\eps$-DP.
Our contribution is then to address several important questions: How can we merge two sketches if their bits have been perturbed via $\F_{p, q}$? How can we estimate cardinality from noisy sketches? And how should we choose $p$ and $q$?

\section{Merging Perturbed Bit Vectors}
\label{sec:merging}

While a randomized response mechanism $\M_{p,q}$ converts a PCSA sketch to a private equivalent, this breaks PCSA's merge operation.
For ordinary PCSA and multisets $D_1, D_2$, the bitwise-\textit{or} 
$\scal(D_1) \lor \scal(D_2) = \scal(D_1 \cup D_2)$
defines a merge operation on sketches that yields the same sketch that would be obtained by first taking the union. 
However, the same operation on noisy sketches does not satisfy this desirable property. 
We develop merge operations on noisy sketches and identify under what conditions they exist. 
In particular, Theorem \ref{thm:unique-deterministic-merge} shows that if a merge operation is deterministic, then \textit{xor} $(\lxor)$ is the only possible merge on noisy sketches, and it only works for certain choices of the mechanism $\M_{p, q}$. 
We show these choices imply that, at best, such a noisy sketch's cardinality estimates have $4\times$ worse variance than that for regular PCSA on the same sized sketch, even if the privacy budget is near-infinite. 
Our main contribution is to provide a novel \textit{randomized} merge operation that adds less noise to the sketch. 
Furthermore, we generalize this operation to perform arbitrary boolean operations on noisy bit vectors.

\subsection{Deterministic Merging}
\label{sec:merging-deterministic}
Applying the standard randomized response mechanism to a PCSA sketch breaks mergeability. PCSA merges sketches using bitwise-\textit{or}, and in the presence of RR noise, the \textit{or} operation results in non-RR noise that biases bits towards 1. 
\citet{pagh2020efficient} address this by replacing
bitwise-\textit{or} ($\lor$) with bitwise-\textit{xor} ($\lxor$)
 operations whenever the sketch is updated or merged. 
However, the \textit{xor} operation destroys cardinality information. 
In particular, the \textit{xor} of a PCSA sketch with itself is the empty sketch. 
Rather than ensuring sketches are invariant to duplicates, they ensure the \emph{distribution} of (merged) sketches are invariant. They do this by subsampling items (including duplicates) independently with probability $1/2$. This effectively encodes bits that were 1 in PCSA as $\bernoulli(1/2)$ values, while 0 bits remain 0.
Unfortunately, this subsampling operation introduces a lot of noise. 
Figure~\ref{fig:head-to-head} shows that even for large $\eps$ 
the resulting cardinality estimates have $4$ times the variance.

We show that this penalty on the accuracy is inherent for any \emph{deterministic} merge. 
Theorem~\ref{thm:unique-deterministic-merge} shows \textit{xor} is, in fact, the only possible way to merge deterministically, so that randomized merges are the only way to improve merging. Our analysis also improves the \citet{pagh2020efficient} sketch by significantly reducing the noise required for an $\eps$-DP privacy guarantee and demonstrates how to merge sketches with different privacy budgets.

\begin{theorem}
\label{thm:unique-deterministic-merge}
Let $f_1 = \F_{p_1,q_1}, f_2 = \F_{p_2,q_2}, f_3 = \F_{p_3,q_3}$, and let $\circ : \{0, 1\}^2 \to \{0, 1\}$ denote a deterministic and symmetric operation. 
The following conditions may only be satisfied simultaneously if $\circ = \lxor$ and $p_1 = p_2 = 1/2$:
\begin{enumerate}
    \item $f_1, f_2$ are $\eps_1$-DP and $\eps_2$-DP for $\eps_1, \eps_2 < \infty$.
    \item $f_1(x) \circ f_2(y) \overset D= f_3(x \lor y)$.
    \item $f_i(0) \overset D\neq f_i(1)$ for $i = 1, 2, 3$.
\end{enumerate}
\end{theorem}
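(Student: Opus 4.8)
The plan is to translate condition 2 into a handful of polynomial identities in the six flip probabilities, exploit that the induced collision probability is bilinear, and then run a short case analysis over the symmetric binary operations, eliminating all but $\lxor$ and simultaneously extracting $p_1=p_2=1/2$.

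First I would introduce, for independent $U\sim\bernoulli(u)$ and $V\sim\bernoulli(v)$, the collision probability
\begin{equation*}
g(u,v)\defn\Pr(U\circ V=1)=c_{00}(1-u)(1-v)+c_{01}(u+v-2uv)+c_{11}uv,
\end{equation*}
where $c_{ab}\defn\circ(a,b)$ and $c_{01}=c_{10}$ by symmetry; the key structural fact is that $g$ is affine in each argument separately. Because $0\lor 0=0$ while $1\lor 0=0\lor 1=1\lor 1=1$, condition 2 is equivalent to the system $g(q_1,q_2)=q_3$ together with $g(p_1,q_2)=g(q_1,p_2)=g(p_1,p_2)=p_3$. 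From $g(p_1,q_2)=g(p_1,p_2)$ and affineness in the second coordinate, either $q_2=p_2$ --- impossible by condition 3 --- or $v\mapsto g(p_1,v)$ is constant, i.e. the linear function $B(u)\defn g(u,1)-g(u,0)$ vanishes at $p_1$; symmetrically $u\mapsto g(u,p_2)$ is constant. Moreover $B\equiv 0$ would make $g$ independent of its second argument, forcing $q_3=p_3$ and contradicting condition 3, so $B$ is a nonzero linear function, and by condition 1 (finite DP forces $p_1\in(0,1)$) its root lies in the open interval $(0,1)$.

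Next I would note $B(0)=\circ(0,1)-\circ(0,0)$ and $B(1)=\circ(1,1)-\circ(0,1)$, each taking a value in $\{-1,0,1\}$, and that a nonzero linear function with a root strictly inside $(0,1)$ must have $B(0)$ and $B(1)$ nonzero and of opposite sign. Scanning the $\{0,1\}$ assignments of $(c_{00},c_{01},c_{11})$, this happens only for $\circ=\lxor$ (signs $(+,-)$) and for \emph{xnor} (signs $(-,+)$), and in both cases the root is $B(0)/(B(0)-B(1))=1/2$, so $p_1=1/2$; the symmetric argument gives $p_2=1/2$. Every remaining operation is killed here at once: the two constants and $\land,\lor$ and their negations have $B$ either identically zero or with its only root at an endpoint of $[0,1]$, incompatible with $p_1\in(0,1)$ (the constants also directly violate condition 3 by making $f_3$ trivial).

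It then remains to separate $\lxor$ from \emph{xnor}. Since $\text{\emph{xnor}}(a,b)=\lnot(a\lxor b)$, a consistent \emph{xnor}-merge onto $f_3$ is exactly an $\lxor$-merge onto $\lnot f_3=\F_{1-p_3,1-q_3}$, so \emph{xnor} is not a genuinely new operation but $\lxor$ composed with a fixed relabeling of the merged bit; substituting $p_1=p_2=1/2$ into the \emph{xnor} collision probability gives $q_3=q_1q_2+(1-q_1)(1-q_2)$, and since the DP parametrization takes $q_1,q_2\le 1/2$ (with $q_1,q_2\neq 1/2$ by condition 3) one computes $q_3>1/2=p_3$, so the output cannot be written in the standard $q_3\le 1/2\le p_3$ form, pinning down $\circ=\lxor$. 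The step I expect to be the main obstacle is precisely this last one: every non-\emph{xor} operation except \emph{xnor} dies immediately from bilinearity plus $p_i\in(0,1)$, but \emph{xnor} passes the same linear-algebra test, so excluding it requires the separate observation that complementing the merge output is a relabeling together with the normalization convention on the flip probabilities.
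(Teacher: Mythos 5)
Your proof is correct, and at its core it runs the same comparison as the paper's --- equating the law of $f_1(0)\circ f_2(1)$ with that of $f_1(1)\circ f_2(1)$ and its mirror image --- but you package it once and for all as a bilinearity statement: $v\mapsto g(p_1,v)$ must be constant, hence the affine function $B(u)=g(u,1)-g(u,0)$ vanishes at $p_1\in(0,1)$, and the sign scan of $B(0)=c_{01}-c_{00}$, $B(1)=c_{11}-c_{01}$ eliminates every symmetric operation except $\lxor$ and \emph{xnor} while simultaneously forcing the root $1/2$. The paper instead enumerates only four operations (``modulo negation''): the trivial map, $\land$, $\lor$, and $\lxor$, deriving a separate polynomial identity for each via Fact~\ref{thm:bernoulli-ops}; your version is more uniform, handles the four negated operations explicitly, and is the only one of the two that confronts \emph{xnor} at all. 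That last point is the right instinct: \emph{xnor} passes every distributional test that $\lxor$ passes (it also forces $p_1=p_2=1/2$ and yields a consistent $f_3$ with $p_3=1/2$ and $q_3=1-q_1-q_2+2q_1q_2$, and $q_3-\tfrac12=2(q_1-\tfrac12)(q_2-\tfrac12)\neq 0$), so it can only be excluded by a convention --- either identifying $\circ$ with $\lnot\circ$, as the paper tacitly does, or imposing the normalization $q_i\le \tfrac12\le p_i$ on all three mechanisms, as you do. Be aware that this normalization is an explicit hypothesis of Theorem~\ref{thm:dp-bit-flip} but is not restated in Theorem~\ref{thm:unique-deterministic-merge}, and your conclusion $q_3>\tfrac12$ genuinely needs $q_1,q_2<\tfrac12$ for \emph{both} inputs (one factor above $\tfrac12$ would flip the sign of the product and land $f_3$ back in standard form). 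Under the paper's conventions this holds, so your argument goes through; relative to the published proof, which never mentions \emph{xnor}, yours is the more complete of the two.
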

Using our general family of RR mechanisms, 
we define a mechanism that adds noise to a PCSA sketch to provide privacy (Lemma~\ref{thm:xor-rr-dp})
while preserving mergeability (Theorem~\ref{thm:xor-rr-merge}). Corollary~\ref{cor:equivalent-eps} shows our privacy analysis is much tighter than that of \citet{pagh2020efficient}. 

\begin{definition}
\label{def:xor-rr}
For $\eps > 0$, let $\Mxor_\eps : \{0, 1\}^d \to \{0, 1\}^d$ denote the mechanism that independently applies an asymmetric random response $\F_{p, q}$ to each element of its input with $p = 1/2, q = 1/(2e^\eps)$.
\end{definition}

\begin{lemma}
\label{thm:xor-rr-dp}
$\Mxor_\eps$ is $\eps$--differentially private.
\end{lemma}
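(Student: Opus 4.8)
The plan is to invoke Theorem~\ref{thm:dp-bit-flip} directly with the specific choices $p = 1/2$ and $q = 1/(2e^\eps)$ from Definition~\ref{def:xor-rr}. Since $\Mxor_\eps$ is exactly the mechanism $\M_{p,q}$ for these values, it suffices to verify that this pair $(p,q)$ meets the two requirements of that theorem: the ordering hypothesis $q \le 1/2 \le p$, and the inequality in Eq.~\eqref{eq:dp-pq-constraints}.

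First I would check the ordering. We have $p = 1/2$, so $1/2 \le p$ holds with equality; and $q = 1/(2e^\eps) = \tfrac{1}{2} e^{-\eps} \le \tfrac{1}{2}$ since $\eps > 0$ implies $e^{-\eps} < 1$. Also $q > 0$ and $p = 1/2 \in (0,1)$, so the open-interval condition $p, q \in (0,1)$ is satisfied. Second I would evaluate the two ratios in Eq.~\eqref{eq:dp-pq-constraints}. The first is $p/q = (1/2)/(1/(2e^\eps)) = e^\eps$. The second is $(1-q)/(1-p) = (1 - \tfrac{1}{2} e^{-\eps})/(1 - 1/2) = 2 - e^{-\eps}$; since $\eps > 0$ gives $e^{-\eps} > 0$, this is strictly less than $2$, and I would argue $2 - e^{-\eps} \le e^\eps$, which follows because $e^\eps + e^{-\eps} \ge 2$ by AM--GM (equivalently, $(e^{\eps/2} - e^{-\eps/2})^2 \ge 0$). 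Hence $\max\{p/q,\ (1-q)/(1-p)\} = e^\eps$, so Eq.~\eqref{eq:dp-pq-constraints} holds, and Theorem~\ref{thm:dp-bit-flip} yields that $\Mxor_\eps = \M_{p,q}$ is $\eps$-DP.

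There is essentially no main obstacle here: the lemma is a one-line corollary of Theorem~\ref{thm:dp-bit-flip} once one checks that the parameter choice saturates the privacy constraint through the $p/q$ branch while leaving the $(1-q)/(1-p)$ branch slack. The only mild care needed is confirming the inequality $2 - e^{-\eps} \le e^\eps$ so that the $\max$ is attained by $p/q$; everything else is direct substitution.
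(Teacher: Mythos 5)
Your proposal is correct and follows essentially the same route as the paper's proof: substitute $p=1/2$, $q=1/(2e^\eps)$ into Theorem~\ref{thm:dp-bit-flip}, note $p/q = e^\eps$, and verify $(1-q)/(1-p) = 2 - e^{-\eps} \leq e^\eps$ via $e^\eps + e^{-\eps} - 2 = (e^{\eps/2} - e^{-\eps/2})^2 \geq 0$. Your additional explicit check of the hypothesis $q \leq 1/2 \leq p$ is a small but welcome bit of extra care.
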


\begin{theorem}
\label{thm:xor-rr-merge}
$\Mxor_{\eps_1}(x) \lxor \Mxor_{\eps_2}(y) \overset D= \Mxor_{\eps^*}(x \lor y)$, where
$\eps^* = -\log(e^{-\eps_1} + e^{-\eps_2} - e^{-(\eps_1 + \eps_2)})$.
\end{theorem}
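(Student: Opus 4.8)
The plan is to prove the identity one coordinate at a time and then lift it to the whole vector by independence across coordinates. Both $\Mxor_{\eps_1}$ and $\Mxor_{\eps_2}$ apply an instance of $\F_{p,q}$ (with $p=1/2$ and the appropriate $q$) independently to each entry of their input, and the XOR merge acts entrywise, so the coordinates of the merged vector are mutually independent and the $i$-th coordinate depends only on the pair $(x_i,y_i)$. Hence it suffices to fix a single pair of input bits $(x,y)\in\{0,1\}^2$ and show $A \lxor B \overset D= \F_{1/2,\,1/(2e^{\eps^*})}(x\lor y)$, where $A = \F_{1/2,\,1/(2e^{\eps_1})}(x)$ and $B = \F_{1/2,\,1/(2e^{\eps_2})}(y)$ are drawn independently.

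I would then split on the value of $x\lor y$. If $x\lor y = 1$, at least one input bit equals $1$; taking $x=1$ without loss of generality (the claim and the formula for $\eps^*$ are symmetric in the two mechanisms), $A\sim\bernoulli(1/2)$ independently of $B$, and the XOR of a fair coin with any independent bit is again a fair coin (condition on $B$). Thus $A\lxor B \sim \bernoulli(1/2)$, which is exactly the law of $\F_{1/2,\,q^*}(1)$ for every admissible $q^*$, so this coordinate matches $\Mxor_{\eps^*}(x\lor y)$ regardless of the precise value of $\eps^*$.

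If instead $x\lor y = 0$, then $x=y=0$, so $A\sim\bernoulli(a)$ with $a = 1/(2e^{\eps_1})$ and $B\sim\bernoulli(b)$ with $b = 1/(2e^{\eps_2})$ are independent. Then $\Pr(A\lxor B = 1) = a(1-b) + (1-a)b = a + b - 2ab$, and substituting the values of $a,b$ gives
\[
a + b - 2ab \;=\; \tfrac12\big(e^{-\eps_1} + e^{-\eps_2} - e^{-(\eps_1+\eps_2)}\big) \;=\; \tfrac12\, e^{-\eps^*}
\]
by the definition of $\eps^*$. Hence $A\lxor B \sim \bernoulli\big(1/(2e^{\eps^*})\big) = \F_{1/2,\,1/(2e^{\eps^*})}(0)$, completing the coordinatewise identity. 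It then remains to check that $\eps^*$ is a legitimate budget, so that $\Mxor_{\eps^*}$ is well-defined in the sense of Definition~\ref{def:xor-rr} and $q^* = 1/(2e^{\eps^*})\le 1/2$: writing $e^{-\eps_1} + e^{-\eps_2} - e^{-(\eps_1+\eps_2)} = 1 - (1-e^{-\eps_1})(1-e^{-\eps_2})$ shows this quantity lies in $(0,1)$ whenever $\eps_1,\eps_2\in(0,\infty)$, so $\eps^*\in(0,\infty)$.

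I do not anticipate a genuine obstacle. The only point needing a careful sentence is the coordinatewise reduction itself — spelling out that all the randomness in $\Mxor_{\eps_1}(x)$, in $\Mxor_{\eps_2}(y)$, and in the deterministic XOR is independent across coordinates and across the two mechanisms — after which the argument collapses to the two elementary Bernoulli-XOR computations above. The closest thing to a subtlety is merely confirming the algebraic identity $a+b-2ab = \tfrac12 e^{-\eps^*}$ and the positivity of $\eps^*$, both routine.
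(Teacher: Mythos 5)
Your proof is correct and follows essentially the same route as the paper's: reduce to a single coordinate by independence, handle $x\lor y=1$ via the fair-coin absorption property of XOR, and compute the Bernoulli parameter $a+b-2ab=\tfrac12 e^{-\eps^*}$ in the $x=y=0$ case. The added check that $\eps^*\in(0,\infty)$ is a small bonus the paper omits.
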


\begin{corollary}
\label{cor:equivalent-eps}
Let $\Mps_\eps$ denote the $\eps$-DP privacy mechanism of \citet[Section~6]{pagh2020efficient}. Then $\Mxor_\eps = \Mps_{2(\exp(\eps) - 1)}$.
\end{corollary}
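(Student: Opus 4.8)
The plan is to prove the corollary as a plain identity of randomized algorithms: show that, once we unwind the definition of $\Mps_{\eps'}$ from \citet[Section~6]{pagh2020efficient}, both $\Mxor_\eps$ and $\Mps_{2(e^\eps-1)}$ are the coordinatewise application of the \emph{same} bit-flip $\F_{p,q}$, and then match the two parameters. The first step is the reduction: I would check that $\Mps_{\eps'}$, like $\Mxor_{\eps'}$, factors as an i.i.d.\ per-coordinate channel of the form $\M_{p,q}$ of Theorem~\ref{thm:dp-bit-flip}. The subsampling-and-\textit{xor} device that \citet{pagh2020efficient} use encodes a would-be-$1$ PCSA bit as an independent $\bernoulli(1/2)$ and leaves a $0$ bit at $0$; here one must be slightly careful, since a PCSA bit that is $1$ may be hit by several items of varying multiplicity, but because each bit of the sketch is a function of a disjoint block of items and the subsampling is independent across items, the parity of the surviving occurrences in a nonempty block is exactly $\bernoulli(1/2)$, independently across bits. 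The additional (symmetric) randomized-response step with flip probability $f(\eps')$ then turns this into $\F_{1/2,\,f(\eps')}$ applied independently to each coordinate of the original PCSA sketch. In particular the ``$p$'' parameter of $\Mps_{\eps'}$ equals $1/2$ for every $\eps'$, already matching $\Mxor_\eps$ by Definition~\ref{def:xor-rr}, so only the ``$q$'' parameter is at issue.

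The second step is to read off $f(\eps')$ from their construction. Their calibration certifies $\F_{1/2,f}$ as $\eps'$-DP with $\eps' = \tfrac{1-2f}{f} = 2\bigl(\tfrac{1}{2f}-1\bigr)$, so $\Mps_{\eps'}$ is by definition the mechanism with $f = \tfrac{1}{\eps'+2}$; that is, $\Mps_{\eps'} = \M_{1/2,\,1/(\eps'+2)}$. (If one prefers, this step can be phrased purely structurally — ``$\Mps_{\eps'}$ is by definition $\M_{1/2,\,1/(\eps'+2)}$'' — reading the flip probability directly off their construction; the DP claim itself is not needed for the identity, only the functional form. The side inequality $2(e^\eps-1)\ge\eps$, which follows from $e^\eps+e^{-\eps}\ge 2$ as in Lemma~\ref{thm:xor-rr-dp}, is what makes their nominal guarantee loose, but it plays no role here.)

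The third step closes the argument. Two product mechanisms $\M_{p,q}$ and $\M_{p',q'}$ coincide as randomized algorithms iff $(p,q)=(p',q')$: feeding in the all-zeros input forces $q=q'$ and the all-ones input forces $p=p'$, and the converse is immediate. Since $\Mxor_\eps = \M_{1/2,\,1/(2e^\eps)}$, it remains only to solve $\tfrac{1}{\eps'+2} = \tfrac{1}{2e^\eps}$, i.e.\ $\eps' = 2e^\eps-2 = 2(e^\eps-1)$; substituting back gives $\Mps_{2(e^\eps-1)} = \M_{1/2,\,1/(2e^\eps)} = \Mxor_\eps$.

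The main obstacle I anticipate is expository rather than mathematical: faithfully extracting from \citet[Section~6]{pagh2020efficient} both the exact bit-level form of their mechanism and the precise calibration of $\eps'$ to the flip probability, and then justifying that their item-level subsampling description really does reduce to a coordinatewise $\F_{p,q}$ on the sketch — in particular the independence across coordinates and the exact $\bernoulli(1/2)$ parity of a nonempty subsampled block. Once that translation is pinned down, the corollary collapses to the one-line identity $\tfrac{1}{2(e^\eps-1)+2} = \tfrac{1}{2e^\eps}$.
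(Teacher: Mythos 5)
Your proposal is correct and matches the paper's (implicit) argument: the paper never writes out a separate proof of Corollary~\ref{cor:equivalent-eps}, relying instead on the observation (see the footnote after the corollary) that $\Mps_{\eps'}$ is the per-coordinate channel $\M_{1/2,\,1/(\eps'+2)}$ while $\Mxor_\eps = \M_{1/2,\,1/(2e^\eps)}$, so the identity reduces to solving $1/(\eps'+2)=1/(2e^\eps)$, i.e.\ $\eps'=2(e^\eps-1)$. Your extra care in justifying that the subsample-and-\textit{xor} construction really factors as an i.i.d.\ coordinatewise $\F_{1/2,q}$ is a welcome elaboration but not a different route.
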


This tighter privacy analysis%
\footnote{It is proven in the appendix of \citet{pagh2020efficient} that $q = 1 / (e^\eps + 1)$ satisfies $\eps$-DP, although the recommendation and main results in the paper rely on the choice of $q = 1 / (2 + \eps)$. Our recommendation of $q = 1 / (2e^\eps)$ is optimal under DP constraints.}
dramatically reduces noise added to achieve the privacy guarantee, effectively increasing the privacy budget by at least a factor of $2$. 
Pragmatically, Figure~\ref{fig:head-to-head} shows that error increases exponentially as $\eps \to 0$.

\subsection{Randomized Merging}
\label{sec:merging-randomized}
Theorem~\ref{thm:unique-deterministic-merge} showed that a deterministic merge is only possible if the 1-bits in a PCSA sketch are randomized to $\bernoulli(1/2)$ values. Thus, even if the privacy budget is nearly infinite, the mergeable DP sketch must add significant noise to the base PCSA sketch.
We show that by moving randomness from the base sketch to the merge procedure, we can achieve lower overall noise while using the standard randomized response mechanism (Definition~\ref{def:sym-rr}).

\begin{definition}
\label{def:sym-rr}
For $\eps > 0$, we denote by $\Msym_\eps : \{0, 1\}^d \to \{0, 1\}^d$ the mechanism that independently applies the standard RR mechanism $\F_{p, 1-p}$ to each element of its input with $p = e^\eps/(e^\eps + 1)$.
\end{definition}

\begin{lemma}
\label{thm:sym-rr-dp}
$\Msym_\eps$ is $\eps$--differentially private.
\end{lemma}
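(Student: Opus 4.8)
The plan is to derive this as an immediate consequence of Theorem~\ref{thm:dp-bit-flip}, since $\Msym_\eps$ is by definition the mechanism $\M_{p,q}$ with $p = e^\eps/(e^\eps+1)$ and $q = 1-p = 1/(e^\eps+1)$, applied to vectors of sensitivity $1$. First I would confirm the standing hypothesis of that theorem, namely $q \le 1/2 \le p$: for every $\eps > 0$ we have $e^\eps > 1$, hence $p = e^\eps/(e^\eps+1) > 1/2$ and $q = 1/(e^\eps+1) < 1/2$, with both probabilities lying strictly inside $(0,1)$.

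It then remains to verify the characterizing inequality~\eqref{eq:dp-pq-constraints}. The first ratio is $p/q = e^\eps$. For the second, since $q = 1-p$ we also have $1-q = p$ and $1-p = q$, so $(1-q)/(1-p) = p/q = e^\eps$ as well. Therefore $\max\{p/q,\, (1-q)/(1-p)\} = e^\eps \le e^\eps$, so the condition of Theorem~\ref{thm:dp-bit-flip} is met (with equality), and its ``if'' direction gives that $\Msym_\eps$ is $\eps$-DP.

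There is essentially no obstacle here: the lemma is a one-line corollary of Theorem~\ref{thm:dp-bit-flip}, and the only point worth a sentence of care is checking that the symmetric choice $q = 1-p$ saturates \emph{both} ratios in~\eqref{eq:dp-pq-constraints} at exactly $e^\eps$ — which is what makes $p = e^\eps/(e^\eps+1)$ the extreme flip probability still compatible with $\eps$-DP. If a self-contained argument were wanted instead, one could bound, for a single bit, the likelihood ratio $\Pr(\F_{p,1-p}(x)=b)/\Pr(\F_{p,1-p}(x')=b)$ over $b\in\{0,1\}$ and $x\neq x'$ by $p/(1-p)=e^\eps$, and then note that on neighboring vectors only one coordinate differs, so the product of the per-coordinate likelihood ratios is again at most $e^\eps$; but given Theorem~\ref{thm:dp-bit-flip} this is redundant.
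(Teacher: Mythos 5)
Your proof is correct and matches the paper's own argument, which likewise verifies $p/q = (1-q)/(1-p) = e^\eps$ for the symmetric choice $q = 1-p$ and invokes Theorem~\ref{thm:dp-bit-flip}. The extra check that $q \le 1/2 \le p$ is a sensible bit of diligence the paper leaves implicit.
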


A merge is a randomized algorithm $g_{\eps_1, \eps_2} : \{ 0, 1 \}^2 \to \{0, 1\}$ 
that commutes with $\lor$ in the following sense:
\[
g_{\eps_1,\eps_2}(\Msym_{\eps_1}(x), \Msym_{\eps_2}(y)) \overset D= \Msym_{\eps^*}(x \lor y).
\]
Since $g_{\eps_1, \eps_2} $ is a random mapping from pairs of bits to single bits, we can represent it as a $4 \times 2$ Markov transition matrix. A valid merge operation is the solution of the resulting matrix equality, with $\eps^*$ a free parameter.
We obtain a optimal randomized merge operation for $\Msym$ by solving for the largest $\eps^*$ that generates a valid solution, which is given by:

\begin{theorem}
\label{thm:sym-rr-merge}
Assume $\eps_1, \eps_2 > 0$. Let $q(\eps) = (e^\eps + 1)^{-1}$,
\[
\eps^* = -\log(e^{-\eps_1} + e^{-\eps_2} - e^{-(\eps_1 + \eps_2)}), \ q^* = q(\eps^*),
\]
\[
K_i = \begin{bmatrix} 1 - q(\eps_i) & q(\eps_i) \\ q(\eps_i) & 1 - q(\eps_i) \end{bmatrix} \text{ for }  i \in \{1, 2\} \text{, and}
\]
\[
v^* = (q^*, 1-q^*, 1-q^*, 1-q^*)^T ,
\]
Letting $\otimes$ denote the Kronecker product, define:
\[
(t_{00}, t_{01}, t_{10}, t_{11})^T = ( K_1^{-1} \otimes K_2^{-1} ) \, v^* , \text{ and}
\]
\[
g_{\eps_1,\eps_2}(a, b) \sim \bernoulli(t_{ab}), \quad a, b \in \{0, 1\}.
\]
Then $g_{\eps_1,\eps_2}(\Msym_{\eps_1}(x), \Msym_{\eps_2}(y)) \overset D= \Msym_{\eps^*}(x \lor y)$, where $g$ is taken bitwise and independently.
\end{theorem}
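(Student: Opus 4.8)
The plan is to verify the claimed identity by a direct computation in the $4$-dimensional space indexed by the input pair $(a,b)\in\{0,1\}^2$, exploiting the product (tensor) structure that the two independent randomized-response channels impose. Write $K_i$ for the $2\times 2$ transition matrix of $\Msym_{\eps_i}$ as in the statement; since the two coordinates are perturbed independently, the joint channel $(x,y)\mapsto(\Msym_{\eps_1}(x),\Msym_{\eps_2}(y))$ has transition matrix $K_1\otimes K_2$, acting on the distribution over the four input patterns $(x,y)$. Concretely, for a fixed input pattern $(x,y)$, the vector of probabilities $\big(\Pr[(a,b)=(0,0)],\Pr[(a,b)=(0,1)],\Pr[(a,b)=(1,0)],\Pr[(a,b)=(1,1)]\big)^T$ equals the corresponding column of $K_1\otimes K_2$.

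Next I would encode the merge rule $g_{\eps_1,\eps_2}$ itself. Since $g$ outputs $\bernoulli(t_{ab})$ on input $(a,b)$, the probability that the merged bit equals $1$, given input pattern $(x,y)$, is the inner product of the vector $(t_{00},t_{01},t_{10},t_{11})^T$ with that same column of $K_1\otimes K_2$. Stacking over the four patterns $(x,y)\in\{(0,0),(0,1),(1,0),(1,1)\}$, the vector of ``output $=1$'' probabilities is exactly $(K_1\otimes K_2)^T\,(t_{00},t_{01},t_{10},t_{11})^T$. Using the definition $(t_{00},t_{01},t_{10},t_{11})^T=(K_1^{-1}\otimes K_2^{-1})\,v^*$ together with the mixed-product property $(K_1\otimes K_2)^T(K_1^{-1}\otimes K_2^{-1})=(K_1^TK_1^{-1})\otimes(K_2^TK_2^{-1})$, and the fact that each $K_i$ is symmetric (so $K_i^T=K_i$ and the product is the identity), this collapses to $v^*$. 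Hence the ``output $=1$'' probability is $q^*$ when $(x,y)=(0,0)$ and $1-q^*$ for the other three patterns, i.e.\ it equals the ``output $=1$'' probability of $\Msym_{\eps^*}$ applied to $x\lor y$. Since both sides are $\bernoulli$ random variables, matching success probabilities gives equality in distribution, and the bitwise-independent claim follows because all operations are coordinatewise.

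Two loose ends remain. First, I must confirm $K_i$ is invertible: $\det K_i=1-2q(\eps_i)=\tfrac{e^{\eps_i}-1}{e^{\eps_i}+1}\neq 0$ for $\eps_i>0$, so $K_i^{-1}$ exists and the tensor inverse is well defined. Second—and this is the real obstacle—one must check that the resulting $t_{ab}$ genuinely lie in $[0,1]$, so that $g_{\eps_1,\eps_2}$ is an honest randomized algorithm; the algebraic identity above holds formally regardless, but only a valid stochastic matrix yields a legitimate merge. I would argue this by writing $K_i^{-1}=\frac{1}{1-2q(\eps_i)}\begin{bmatrix}1-q(\eps_i) & -q(\eps_i)\\ -q(\eps_i) & 1-q(\eps_i)\end{bmatrix}$, expanding $(K_1^{-1}\otimes K_2^{-1})v^*$ in closed form, and substituting $q^*=q(\eps^*)$ with the stated $\eps^*=-\log(e^{-\eps_1}+e^{-\eps_2}-e^{-(\eps_1+\eps_2)})$; the choice of $\eps^*$ is precisely what is needed to keep all four entries in $[0,1]$ (and, in fact, it is the largest such value, which is why it is called optimal). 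The remaining work is then routine algebra: simplify the four scalar expressions, verify each is a probability, and note in particular that the entry $t_{11}$ equals $1-q^*$ on the diagonal pattern as a consistency check.
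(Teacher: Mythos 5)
Your overall strategy is the same as the paper's: encode the two independent randomized-response channels as the Kronecker product $K_1\otimes K_2$ of row-stochastic matrices, encode the merge as a $4\times 2$ transition matrix, and observe that defining $t=(K_1^{-1}\otimes K_2^{-1})v^*$ makes the composed channel equal to ``\textit{or}, then $\Msym_{\eps^*}$'' by the mixed-product property (the paper actually proves a general $k$-way version, Theorem~\ref{thm:sym-merge-general}, and derives Theorem~\ref{thm:sym-rr-merge} as the case $k=2$). Your verification of the distributional identity is correct and clean; using the symmetry of $K_i$ to collapse $(K_1\otimes K_2)^T(K_1^{-1}\otimes K_2^{-1})$ to the identity is a legitimate shortcut.

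However, there is a genuine gap: you correctly identify that the whole theorem hinges on $t_{ab}\in[0,1]$ --- without this, $g_{\eps_1,\eps_2}$ is not an algorithm and the ``formal'' identity proves nothing --- but you then defer this to ``routine algebra'' and never carry it out. This is not routine; it is where essentially all of the paper's work lies. The paper writes each entry of $(K_1^{-1}\otimes K_2^{-1})v^*$ as an affine function $q^*+(1-2q^*)(K_1^{-1}\otimes K_2^{-1})_{i1}$ of $q^*$ (using the fact that the rows of $K_i^{-1}$ sum to one), identifies which of the four Kronecker entries is largest and which is smallest, derives the two binding inequalities, shows one implies the other, and verifies that the stated $q^*=q(\eps^*)$ is exactly the threshold at which the binding constraint holds with equality. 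Your proposal asserts the conclusion of that computation (that this $\eps^*$ works and is maximal) without performing it, so as written the proof is incomplete at its only nontrivial step. A further small error: your proposed sanity check that $t_{11}=1-q^*$ is false in general; one computes $t_{11}=(1-q^*)-(1-2q^*)\,q_1q_2/\bigl((1-2q_1)(1-2q_2)\bigr)$ with $q_i=q(\eps_i)$, which differs from $1-q^*$ whenever $q_1q_2>0$ and $q^*\neq 1/2$ (indeed, for small $\eps_1=\eps_2$ one has $t_{11}\to 0$ while $1-q^*\to 1/2$). The merge must partially invert the input noise, so even the all-ones observation does not map to success probability $1-q^*$.
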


When the original vectors $x$ and $y$ are visible in addition to the merged vector, the $\eps^*$ parameter of Theorems~\ref{thm:xor-rr-merge} and \ref{thm:sym-rr-merge} is best interpreted as a measure of utility in the merged sketch, rather than a privacy budget, since by the post-processing invariance of DP \citep{dwork2006calibrating}, no additional privacy leakage occurs from the release of the merged vector. It is for this reason we seek the maximal $\eps^*$ in merging. Noting that Theorems~\ref{thm:xor-rr-merge} and \ref{thm:sym-rr-merge} produce identical $\eps^*$ and that $\Msym_\eps$ is less noisy than $\Mxor_\eps$, $\Msym$ remains the preferred mechanism after merging.

\begin{remark}
\label{remark:k-way-merge}
By induction, the merges prescribed in Theorems~\ref{thm:xor-rr-merge} and \ref{thm:sym-rr-merge} allow any $k$ bit vectors of equal length $x_1, \dots, x_k$ privatized using $\eps_1, \dots, \eps_k$ to be merged, resulting in a vector  $v = (x_1 \lor \dots \lor x_k)$ privatized with
\[
\eps^* = -\log \left( 1 - \prod_{i=1}^k(1 - e^{-\eps_i} ) \right) .
\]
A natural question is whether there exists a randomized merge algorithm $g_{\eps_1, \dots, \eps_k} : \{0, 1\}^k \to \{0, 1\}$ that satisfies a property like Theorem~\ref{thm:sym-rr-merge} with a larger $\eps^*$ than given by induction over the pairwise merges. In Appendix~\ref{sec:proofs}, we prove a more general form of Theorem~\ref{thm:sym-rr-merge} (Theorem~\ref{thm:sym-merge-general}), which answers this question in the negative.
\end{remark}

\subsection{General Boolean Operations}
\label{sec:general-boolean-ops}
We briefly switch focus from distinct counting to present a generalization to Boolean operations under randomized response that may be of foundational interest, e.g., in designing intersection operations. 
In distinct-count sketches, set unions correspond to bitwise-\textit{or} operations, and the challenge posed by privacy is performing an equivalent operation over noisy bits.
PCSA, like other mergeable sketches, defines a homomorphism $\scal$ from multisets to sketches. 
The commutative diagram below illustrates this mergeability property, since 
it does not matter which path one takes from $D_1,D_2$ to $v_1 \lor v_2$.
Likewise, our merge operation $g$  from Theorem~\ref{thm:sym-rr-merge} ensures that the privacy mechanism $\M$  makes the diagram commute.
By preserving the structure of the union operation, inferences about the cardinality of the union can be made from merged, private sketches. 

\begin{center}
    \begin{tikzpicture}
    \matrix(m)
    [
            matrix of math nodes,
            row sep    = 2.5em,
            column sep = 2.5em
    ]
    {
          \text{\small{\textsc{Sets}}} & \text{\small{\textsc{Bit Vectors}}} & \text{\small{\textsc{DP Bit Vectors}}} \\[-2.25em]
          D_1,D_2 & v_1, v_2 & \M_{\eps}(v_1), \M_{\eps}(v_2)\\
          D_1 \cup D_2  & v_1 \lor v_2 &  \M_{\eps^*}(v_1 \lor v_2) \\
    };
        \path[-stealth]
         (m-2-1) edge node [left] {$\cup$} (m-3-1)
        (m-2-1.east |- m-2-2)
          edge node [above] {$\scal$} (m-2-2)
        (m-3-1) edge node [above] {$\scal$}
          (m-3-2)
         (m-2-2) edge node [left] {$\lor$} (m-3-2)
         (m-2-2) edge node [above] {$\M_\eps$} (m-2-3)
         (m-3-2) edge node [above] {$\M_{\eps^*}$} (m-3-3)
         (m-2-3) edge node [left] {$g$} (m-3-3)
         ;
    \end{tikzpicture}
\end{center}

Here, we generalize the \textit{or} ($\lor$) merge under $\Msym$ to any logical operation. In particular, Corollary~\ref{thm:and-sym-rr-merge} shows a simple change in our target probabilities $v^*$ yields the appropriate randomized merge for \textit{and} ($\land$), while Lemmas~\ref{thm:xor-msym} and \ref{thm:not-msym} in Appendix~\ref{sec:and-merging-appendix} demonstrate a merge for \textit{xor} ($\lxor$) and show that \textit{not} ($\lnot$) commutes with $\Msym$.

\begin{corollary}
\label{thm:and-sym-rr-merge}
Assume the setting of Theorem~\ref{thm:sym-rr-merge}, but set $v^* = (q^*, q^*, q^*, 1-q^*)^T$. Then $g_{\eps_1,\eps_2}(\Msym_{\eps_1}(x), \Msym_{\eps_2}(y)) \overset D= \Msym_{\eps^*}(x \land y)$, where $g$ is taken bitwise and independently.
\end{corollary}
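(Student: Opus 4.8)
The plan is to mirror the proof of Theorem~\ref{thm:sym-rr-merge}, changing only the target vector. Since $\Msym_{\eps_1}$, $\Msym_{\eps_2}$, $g_{\eps_1,\eps_2}$, and $\Msym_{\eps^*}$ are all Markov channels on bits, the claimed distributional identity is equivalent to an equality of the corresponding transition matrices, and by row-stochasticity it suffices to match the single column recording $\Pr[\text{output}=1]$. The composite channel ``apply $\Msym_{\eps_1}\otimes\Msym_{\eps_2}$, then $g_{\eps_1,\eps_2}$'' sends the input pair $(x,y)$ to $1$ with probability $\big[(K_1\otimes K_2)(t_{00},t_{01},t_{10},t_{11})^T\big]_{(x,y)}$, while the target channel ``form $x\land y$, then apply $\Msym_{\eps^*}$'' sends $(x,y)$ to $1$ with probability $q^*$ unless $(x,y)=(1,1)$, in which case the probability is $1-q^*$; that is, its column is exactly $v^* = (q^*,q^*,q^*,1-q^*)^T$. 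So the identity reduces to $(K_1\otimes K_2)(t_{00},t_{01},t_{10},t_{11})^T = v^*$, and, using $(K_1\otimes K_2)^{-1} = K_1^{-1}\otimes K_2^{-1}$, this is solved precisely by the $t_{ab}$ prescribed in the statement. The value $\eps^*$ is the same as in the \textit{or} case and is positive and finite for $\eps_1,\eps_2>0$, so $q^*=q(\eps^*)\in(0,1/2)$ is well-defined.

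The one substantive point left is that the prescribed $t_{ab}$ are genuine probabilities, i.e.\ $t_{ab}\in[0,1]$ for all $a,b\in\{0,1\}$, so that $g_{\eps_1,\eps_2}$ is a legitimate randomized map; this is where the specific choice of $\eps^*$ enters. Rather than repeat the algebra, I would deduce it from the \textit{or} case by De Morgan symmetry. Let $t^{\lor}_{ab}$ be the probabilities from Theorem~\ref{thm:sym-rr-merge}, which satisfy $t^\lor_{ab}\in[0,1]$, and let $J = \left[\begin{smallmatrix}0&1\\1&0\end{smallmatrix}\right]$ be the bit-swap matrix. Each $K_i$ commutes with $J$ (it is doubly symmetric), hence so does $K_i^{-1}$, and $K_i^{-1}$ has unit row sums since $K_i\mathbf 1 = \mathbf 1$; therefore $K_1^{-1}\otimes K_2^{-1}$ commutes with $J\otimes J$ and fixes the all-ones vector. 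Since $\mathbf 1 - v^* = (J\otimes J)\,v^*_{\lor}$ with $v^*_\lor=(q^*,1-q^*,1-q^*,1-q^*)^T$ the \textit{or} target vector, we get
\[
(t_{00},t_{01},t_{10},t_{11})^T = (K_1^{-1}\otimes K_2^{-1})\,v^* = \mathbf 1 - (J\otimes J)\,(K_1^{-1}\otimes K_2^{-1})\,v^*_\lor ,
\]
i.e.\ $t_{ab} = 1 - t^{\lor}_{\bar a \bar b}$ (with $\bar{\cdot}$ the bit complement), which lies in $[0,1]$.

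Equivalently, and perhaps most transparently, one can argue entirely at the level of random variables: define $g_{\eps_1,\eps_2}(a,b) := \lnot\, g^{\lor}_{\eps_1,\eps_2}(\lnot a,\lnot b)$, and combine Lemma~\ref{thm:not-msym} ($\lnot\Msym_{\eps}(x)\overset D=\Msym_{\eps}(\lnot x)$), Theorem~\ref{thm:sym-rr-merge}, and De Morgan's law to get $g_{\eps_1,\eps_2}(\Msym_{\eps_1}(x),\Msym_{\eps_2}(y)) \overset D= \lnot\,\Msym_{\eps^*}(\lnot x\lor\lnot y) \overset D= \Msym_{\eps^*}(x\land y)$; the matrix identity above then certifies that this $g$ coincides with the Bernoulli-$t_{ab}$ description in the statement. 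Either way, the only genuine obstacle is the $[0,1]$ membership of the $t_{ab}$, and the De Morgan reduction disposes of it by transporting the corresponding fact---the crux of the proof of Theorem~\ref{thm:sym-rr-merge}---from \textit{or} to \textit{and}; the remainder is Kronecker-product bookkeeping parallel to that proof.
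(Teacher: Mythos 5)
Your proof is correct, and while it shares the paper's overall skeleton (reduce the distributional identity to the transition-matrix equation $(K_1\otimes K_2)\,t = v^*$ and invert via the Kronecker product), it handles the crux --- showing the prescribed $t_{ab}$ are genuine probabilities for the \emph{same} $\eps^*$ as the \textit{or} case --- by a genuinely different route. The paper's proof re-runs the extremal-constraint analysis of Theorem~\ref{thm:sym-merge-general} with $K^{\operatorname{or}}$ replaced by $K^{\operatorname{and}}$, observing that the relevant column again has the form $q'\mathbf 1 + (1-2q')e_i$ (now with $e_{2^k}$ in place of $e_1$) and asserting that the same bounds on $q'$ result; this implicitly relies on the first and last columns of $K_{\eps_1}^{-1}\otimes\cdots\otimes K_{\eps_k}^{-1}$ having the same set of entries. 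Your De Morgan reduction makes that symmetry explicit and rigorous: since each $K_i^{-1}$ commutes with the bit-swap $J$ and fixes $\mathbf 1$, you get the clean identity $t^{\land}_{ab} = 1 - t^{\lor}_{\bar a\bar b}$, which both certifies $t^{\land}_{ab}\in[0,1]$ and explains \emph{why} $\eps^*_\land = \eps^*_\lor$, rather than re-deriving it. Your purely probabilistic variant, $g(a,b) := \lnot\, g^{\lor}(\lnot a, \lnot b)$ combined with Lemma~\ref{thm:not-msym} and De Morgan's law, is an even more economical existence argument (and there is no circularity in invoking that lemma, whose proof is independent). What the paper's approach buys is that it is stated directly for $k$-ary merges and for arbitrary target columns of the form $q'\mathbf 1 + (1-2q')e_i$; your duality argument extends to $k$-ary \textit{and} just as easily, but would not cover operations lacking a complementary partner already proved.
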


Formally, for each binary logical operator $\square$, there is a function $\eps^*_\square(\eps_1, \eps_2)$ combining two privacy budgets that endows pairs of bit vectors and privacy budgets with the semigroup structure
$(v_1, \eps_1) \cdot (v_2, \eps_2) \defn (v_1 \,\square\, v_2, \eps^*_\square(\eps_1, \eps_2))$.
The privacy mechanism $\Msym$ then defines a mapping
$\phi(v, \eps) := (\Msym_\eps(v), \eps)$ that is a homomorphism from this semigroup to its noisy counterpart.
These operations are summarized in Table~\ref{tab:boolean-operations-on-msym}.

\begin{table}
    \centering
    \caption{Boolean Operations on Bit Vectors under $\Msym$}
    \vskip 0.15in
    \begin{small}
    \begin{sc}
    \begin{tabular}{ccc}
        \hline
        Op. ($\square$) & DP Op. & $\eps^*_\square(\eps_1, \eps_2)$ \\
        \hline
        $\lnot$ & Lem.~\ref{thm:not-msym} & --- \\
        $\lor$ & Thm.~\ref{thm:sym-rr-merge} & $-\log(e^{-\eps_1} + e^{-\eps_2} - e^{-(\eps_1 + \eps_2)})$ \\
        $\land$ & Cor.~\ref{thm:and-sym-rr-merge} & $-\log(e^{-\eps_1} + e^{-\eps_2} - e^{-(\eps_1 + \eps_2)})$ \\
        $\lxor$ & Lem.~\ref{thm:xor-msym} & $\log \left( \frac{1 + e^{\eps_1 + \eps_2}}{e^{\eps_1} + e^{\eps_2}} \right)$ \\
        \hline
    \end{tabular}
    \end{sc}
    \end{small}
    \vskip -0.1in
    \label{tab:boolean-operations-on-msym}
\end{table}

Our randomized merging technique is crucial in supporting general Boolean operations. Unlike our randomized merge operations, Corollary~\ref{thm:impossible-deterministic-and-or} shows no \textit{deterministic} operations $\circ$ and $\bullet$ can define merge operations for both \textit{or} ($\lor$) and \textit{and} ($\land$) under the same RR mechanism.

\begin{corollary}
\label{thm:impossible-deterministic-and-or}
Assume the setting of Theorem~\ref{thm:unique-deterministic-merge}. Let $\bullet : \{0, 1\}^2 \to \{0, 1\}$ denote a deterministic and symmetric operation. It is impossible to satisfy conditions (1)--(3) of Theorem~\ref{thm:unique-deterministic-merge} in addition to the following:
\begin{enumerate}\setcounter{enumi}{3}
\item $f_1(x) \bullet f_2(y) \overset D= f_3(x \land y)$.
\end{enumerate}
\end{corollary}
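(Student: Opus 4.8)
The plan is to let Theorem~\ref{thm:unique-deterministic-merge} do the structural work and then rule out condition~(4) by a short ``fair-coin absorption'' argument. Suppose, toward a contradiction, that conditions~(1)--(4) all hold. By Theorem~\ref{thm:unique-deterministic-merge}, conditions~(1)--(3) already force $\circ = \lxor$ and $p_1 = p_2 = 1/2$, so $f_1(1) \sim \bernoulli(1/2)$ and $f_2(1) \sim \bernoulli(1/2)$. I would then extract two facts about $f_3$. Evaluating condition~(2), which now reads $f_1(x) \lxor f_2(y) \overset D= f_3(x \lor y)$, at $(x,y) = (1,0)$ gives $f_3(1) \overset D= f_1(1) \lxor f_2(0)$; since the \textit{xor} of a fair coin with an independent bit is again a fair coin, $f_3(1) \sim \bernoulli(1/2)$, i.e.\ $p_3 = 1/2$. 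Condition~(3) at $i = 3$ then forces $f_3(0) \sim \bernoulli(q_3)$ with $q_3 \neq 1/2$.

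Next I would analyze condition~(4), $f_1(x) \bullet f_2(y) \overset D= f_3(x \land y)$, at the two inputs $(1,1)$ and $(1,0)$. Because $f_1(1) \sim \bernoulli(1/2)$ is independent of $f_2(y)$, for each bit value $b$ the variable $f_1(1) \bullet b$ has the law $\bernoulli(\beta_b)$ with $\beta_b = \tfrac{1}{2}\bigl(\mathbb{1}[0 \bullet b = 1] + \mathbb{1}[1 \bullet b = 1]\bigr) \in \{0, \tfrac{1}{2}, 1\}$, so $f_1(1) \bullet f_2(y)$ is $\bernoulli$ with parameter the $f_2(y)$-mixture of $\beta_0$ and $\beta_1$. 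At $(1,1)$ we have $f_2(1) \sim \bernoulli(1/2)$, so the output is $\bernoulli\bigl(\tfrac{\beta_0 + \beta_1}{2}\bigr)$, which must equal $f_3(1) = \bernoulli(1/2)$; hence $\beta_0 + \beta_1 = 1$. Among $\beta_0, \beta_1 \in \{0, \tfrac{1}{2}, 1\}$ this leaves only $\{\beta_0,\beta_1\} = \{0,1\}$ or $\beta_0 = \beta_1 = \tfrac{1}{2}$, and the first is impossible: $\beta_0, \beta_1 \in \{0,1\}$ forces $0 \bullet 0 = 1 \bullet 0$ and $0 \bullet 1 = 1 \bullet 1$, whence $\beta_0 = 1 \bullet 0$ and $\beta_1 = 0 \bullet 1$, and symmetry $1 \bullet 0 = 0 \bullet 1$ then gives $\beta_0 = \beta_1$. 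So $\beta_0 = \beta_1 = \tfrac{1}{2}$. But then the output at $(1,0)$ is $\bernoulli\bigl((1-q_2)\beta_0 + q_2\beta_1\bigr) = \bernoulli(1/2)$, which must equal $f_3(0) = \bernoulli(q_3)$, forcing $q_3 = 1/2$ and contradicting $q_3 \neq 1/2$. Therefore no deterministic symmetric $\bullet$ can satisfy~(4).

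The argument has essentially no deep obstacle; the only step needing care is the bookkeeping in the first paragraph, namely pinning down $p_3 = 1/2$ and $q_3 \neq 1/2$ from conditions~(2) and~(3) after invoking Theorem~\ref{thm:unique-deterministic-merge}, since it is exactly the fair-coin structure of $f_1(1)$ (and hence $f_3(1)$) that makes the second-paragraph case analysis collapse. An equivalent and more concrete route is to enumerate the eight symmetric Boolean functions on two bits: applied to a pair of independent fair coins only $\lxor$ and $\lnot(\cdot \lxor \cdot)$ produce $\bernoulli(1/2)$ (the others give bias in $\{0, \tfrac{1}{4}, \tfrac{3}{4}, 1\}$), and each of those absorbs the fair coin $f_1(1)$ at input $(1,0)$, again yielding $\bernoulli(1/2) \neq \bernoulli(q_3)$.
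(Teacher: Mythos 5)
Your proof is correct, but it takes a genuinely different route from the paper's. The paper first establishes a dual statement (Corollary~\ref{thm:and-unique-deterministic-merge}): substituting $\lnot x, \lnot y$ into the merge condition turns an \textit{and}-merge for $(f_1, f_2, f_3)$ into an \textit{or}-merge for $(\F_{q_1,p_1}, \F_{q_2,p_2}, \F_{q_3,p_3})$, so Theorem~\ref{thm:unique-deterministic-merge} applies a second time and forces $q_1 = q_2 = 1/2$; combined with $p_1 = p_2 = 1/2$ from the \textit{or} case this gives $f_1(0) \overset D= \bernoulli(1/2) \overset D= f_1(1)$, violating condition (3) at $i=1$. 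You instead localize the contradiction at $f_3$: from $p_1 = p_2 = 1/2$ and condition (2) you deduce $p_3 = 1/2$, hence $q_3 \neq 1/2$ by condition (3), and your $\beta_b$ bookkeeping shows that any symmetric deterministic $\bullet$ matching $f_3(1) \sim \bernoulli(1/2)$ at input $(1,1)$ must have $\beta_0 = \beta_1 = 1/2$, so its output at $(1,0)$ is also a fair coin, contradicting $q_3 \neq 1/2$. Each step of yours checks out (including the elimination of $\{\beta_0,\beta_1\}=\{0,1\}$ via symmetry of $\bullet$, and the robustness of the argument to $\circ$ being \textit{xnor} rather than \textit{xor}). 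Your version is self-contained --- it uses only the conclusion $p_1 = p_2 = 1/2$ and does not require rerunning the operator case analysis for $\land$ --- at the cost of some extra casework on symmetric Boolean functions; the paper's negation trick is shorter because it recycles Theorem~\ref{thm:unique-deterministic-merge} wholesale and yields the sharper intermediate fact that a deterministic \textit{and}-merge by itself forces $q_1 = q_2 = 1/2$.
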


\section{Cardinality Estimation}
\label{sec:sketches}

The Sketch-Flip-Merge method developed so far satisfies privacy and mergeability requirements, but it remains to show how SFM summaries may be used to estimate cardinalities.
We develop a composite likelihood--based estimator that is consistent and asymptotically optimal. 
We give an analytic estimator of the error that closely matches the true error in our experiments.

Likelihood-based approaches to cardinality estimation have been used in the non-private setting \citep{clifford2012statistical,lang2017back,ertl2017new,ting2019approximate}, where they have demonstrated greater accuracy than competing estimators for PCSA. 
While true maximum likelihood estimation of $n$ given a sketch is
computationally infeasible due to non-independence of bits in the sketch
\citep{ting2019approximate,ertl2017new},
the marginal likelihood for any bit is easy to derive.
Similar to \citet{ting2019approximate}, we derive a composite marginal likelihood estimator \citep{lindsay1988composite,varin2011overview} for $n$.

Let $C_{ij}$ denote the number of unique items $x \in D$ 
mapped to bucket $i$ and value $j$ in the sketch
and $\rho_{ij}$ be the probability an item is mapped to that location. Then, assuming the use of universal random hashes, the following generative process describes the SFM summary $T = \M_{p, q}(\scal(D))$.
\begin{align*}
(C_{11}, \dots, C_{BP}) &\sim \multinomial(n, \rho_{11}, \ldots, \rho_{BP}) \\
T_{ij} \mid C_{ij} &\sim \F_{p,q}( 1(C_{ij} > 0) )
\end{align*}

While the joint distribution of $ \{ C_{ij} \}_{i,j}$, and hence the observed $ \{ T_{ij} \}_{i,j}$,
involves an intractable sum over integer partitions of $n$ into at most $BP$ parts, the marginal distribution of a single bit $T_{ij}$ is easy to compute. Note that cell $C_{ij}$'s probability of occupancy $\rho_{ij}$  depends only on its level $j$, with
$\rho_{ij} = 2^{-\min\{j, P-1\}} / B$.
Then $\Pr(C_{ij}=0 \mid n) = \gamma_j^n$, where $\gamma_j = 1-\rho_{ij}$, and
\[
T_{ij} \sim \bernoulli\left( p (1-\gamma_j^n) + q \gamma_j^n \right).
\]
The composite marginal log-likelihood \citep{lindsay1988composite,varin2011overview} replaces the true log-likelihood by the surrogate $\ell_{p, q}(n; t)$ that sums over marginal log-probabilities,
\[
\ell_{p, q}(n; t) = \sum_{ij} (1 - t_{ij}) \log \left( 1 - p + (p-q) \gamma_j^n \right) + \sum_{ij} t_{ij} \log \left( p - (p-q) \gamma_j^n \right),
\]
where $t = \M_{p, q}(\scal(D))$ denotes a realized SFM summary.
The corresponding composite maximum likelihood estimator is
$\hat n = \max_n \ell_{p,q}(n; t)$
and can be optimized by Newton's method.
The required first and second
derivatives of $\ell_{p,q}$ are
\begin{align*}
\ell'_{p, q}(n; t) &= \sum_{ij} (1 - t_{ij}) \tfrac{(p-q) \gamma_j^n \log(\gamma_j)}{1 - p + (p-q)\gamma_j^n} - \sum_{ij} t_{ij} \tfrac{(p-q) \gamma_j^n \log(\gamma_j)}{p - (p-q)\gamma_j^n} , \\
\ell''_{p, q}(n; t) &= \sum_{ij} (1 - t_{ij}) \tfrac{(1-p) (p-q) (\log \gamma_j)^2 \gamma_j^n}{(1 - p + (p-q) \gamma_j^n)^2} - \sum_{ij} t_{ij} \tfrac{p (p-q) (\log \gamma_j)^2 \gamma_j^n}{(p - (p-q) \gamma_j^n)^2} .
\end{align*}

In the absence of privacy (i.e., $p = 1, q = 0$), $\ell_{p, q}(\, \cdot \, ; t)$ is strictly concave. 
While this is not true in the private case, Theorem~\ref{thm:shape-of-loglik} states that the expectation of $\ell_{p, q}$ remains concave over the interval $(0, n+\Delta)$ for some $\Delta > 0$.

\begin{theorem}
\label{thm:shape-of-loglik}
Let $D$ be a multiset such that $|\fn{set}(D)| = n$. Let $T = \M_{p, q}(\scal(D))$. Let $f(\hat n) = \E[\ell_{p, q}(\hat n; T)]$, where the expectation is taken over the randomness of the hash functions $h_1, h_2$ and the privacy mechanism $\M_{p, q}$. Then $f(\hat n)$ attains a global maximum at $\hat n = n$ and is concave on an interval containing $(0, n]$ in its interior.
\end{theorem}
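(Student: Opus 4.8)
The plan is to collapse \(f\) to an explicit one‑dimensional function using linearity of expectation, and then read off both claims from the shape of the binary cross‑entropy. First I would write \(f\) out explicitly. Since \(\ell_{p,q}(\hat n;t)\) is affine in the bits \(t_{ij}\), linearity of expectation sidesteps the intractable joint law of \(T\): \(f(\hat n)=\sum_{ij}\bigl[(1-\E T_{ij})\log(1-p+(p-q)\gamma_j^{\hat n})+\E T_{ij}\log(p-(p-q)\gamma_j^{\hat n})\bigr]\), and the marginal \(T_{ij}\sim\bernoulli(p(1-\gamma_j^n)+q\gamma_j^n)\) derived above gives \(\E T_{ij}=a_j(n)\), where \(a_j(m):=p-(p-q)\gamma_j^m\) is the occupancy probability of a level‑\(j\) bit at cardinality \(m\). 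Since \(\gamma_j\) (hence \(a_j\)) depends only on \(j\) and each level has \(B\) cells,
\[
f(\hat n)=B\sum_{j=1}^{P} g_j(\hat n),\qquad g_j(\hat n):=(1-a_j(n))\log(1-a_j(\hat n))+a_j(n)\log a_j(\hat n).
\]
Using \(0<q<p<1\) and \(0<\gamma_j<1\), the map \(m\mapsto a_j(m)\) is a strictly increasing analytic bijection of \([0,\infty)\) onto \([q,p)\subset(0,1)\), so each \(g_j=h_j\circ a_j\) with \(h_j(u):=(1-a_j(n))\log(1-u)+a_j(n)\log u\) smooth on \((0,1)\) — minus the cross‑entropy of \(\bernoulli(a_j(n))\) relative to \(\bernoulli(u)\) — and in particular \(f\in C^\infty(0,\infty)\).

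\emph{Global maximum.} Each \(h_j\) is strictly concave on \((0,1)\) with \(h_j'(u)=(a_j(n)-u)/(u(1-u))\), hence uniquely maximized at \(u=a_j(n)\) (equivalently by Gibbs' inequality). Because \(a_j\) is injective, \(a_j(\hat n)=a_j(n)\iff\hat n=n\); thus \(g_j(\hat n)=h_j(a_j(\hat n))\le h_j(a_j(n))=g_j(n)\) with equality only at \(\hat n=n\), and summing over \(j\) shows \(f\) attains its unique global maximum at \(\hat n=n\).

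\emph{Concavity.} By the chain rule, with \(u=a_j(\hat n)\), \(g_j''(\hat n)=h_j''(u)\,(a_j'(\hat n))^2+h_j'(u)\,a_j''(\hat n)\). One computes \(a_j'(\hat n)=-(p-q)(\log\gamma_j)\gamma_j^{\hat n}>0\), \(a_j''(\hat n)=(\log\gamma_j)\,a_j'(\hat n)<0\), and \(h_j''(u)=-(1-a_j(n))/(1-u)^2-a_j(n)/u^2<0\). Hence the first term \(h_j''(u)(a_j')^2\le0\) for all \(\hat n>0\); the second term satisfies \(h_j'(u)\ge0\) exactly when \(u\le a_j(n)\), i.e. (since \(a_j\) increases) when \(\hat n\le n\), and there \(h_j'(u)a_j''(\hat n)\le0\). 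So \(f''=B\sum_j g_j''<0\) on \((0,n]\). Moreover at \(\hat n=n\) we have \(h_j'(a_j(n))=0\), so \(g_j''(n)=h_j''(a_j(n))(a_j'(n))^2<0\) strictly; by continuity each \(g_j''<0\) on some \((n-\delta_j,\,n+\delta_j)\). Taking \(\Delta:=\min_{1\le j\le P}\delta_j>0\) (finite since \(P<\infty\)), we get \(f''<0\) on the open interval \((0,\,n+\Delta)\), which contains \((0,n]\) in its interior, proving the theorem.

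\emph{Main obstacle.} There is no deep step here — the content is sign bookkeeping in \(g_j''\). The point that needs care is that the ``bad'' summand \(h_j'(u)\,a_j''(\hat n)\) is nonpositive \emph{only} for \(\hat n\le n\), and it is precisely this term that prevents global concavity; the extra room \(\Delta>0\) to the right of \(\hat n=n\) comes from observing that this term \emph{vanishes} at \(\hat n=n\), leaving \(g_j''(n)<0\) strictly, after which continuity (and finiteness of \(P\)) does the rest. This also explains why the statement asserts concavity only on an interval rather than on all of \((0,\infty)\): the behaviour is genuinely one‑sided.
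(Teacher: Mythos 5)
Your proof is correct and follows essentially the same route as the paper's: both reduce $f$ to a sum over levels by linearity of expectation and then do a termwise sign analysis of the second derivative, with your reparametrization through $u=a_j(\hat n)$ and Gibbs' inequality being a cleaner packaging of the paper's direct computation of $\phi_j'$ and $\phi_j''$. One small credit to you: the paper's proof stops at ``$f''(\hat n)<0$ for $\hat n\le n$,'' leaving the extension to an interval containing $(0,n]$ in its \emph{interior} implicit, whereas you explicitly supply the needed step ($g_j''(n)<0$ strictly, plus continuity and finiteness of $P$, yielding concavity on $(0,n+\Delta)$).
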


\subsection{Theoretical Results}
\label{sec:theory}
We choose to use composite marginal likelihood due to its attractive theoretical properties. In particular, the use of a true likelihood, even if incomplete, ensures that cardinality estimates are asymptotically consistent, and the Hessian of the composite likelihood provides an estimate of the variance \citep{ting2019approximate}.
We further show the cardinality estimates are asymptotically optimal in the typical case when the cardinality is large relative to the sketch size.

\begin{theorem}
\label{thm:sketch convergence in probability}
Let $S_n$ denote a PCSA summary of $n$ distinct items
with $B(n)$ buckets and $P=\infty$ levels.
Let $\overline{S}_n$ denote a modified PCSA summary 
of $\poisson(n)$ distinct items,
and $\tilde{S}_n$ denote one where the composite marginal likelihood is the true likelihood. 
If $B \log B =o(\sqrt{n})$, then there exists modified PCSA summaries $\tilde{S}_n, \overline{S}_n$
where
\[
\Pr(S_n = \tilde{S}_n = \overline{S}_n) \to 1 \quad \mbox{as $n \to \infty$.}
\]
\end{theorem}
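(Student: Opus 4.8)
The plan is to couple three random objects—the standard PCSA summary $S_n$ of $n$ fixed items, the Poissonized version $\overline{S}_n$ built from $\poisson(n)$ items, and the ``independent-bits'' version $\tilde S_n$ whose bits are mutually independent so that the composite marginal likelihood coincides with the true likelihood—so that all three coincide with probability tending to $1$. I would build the coupling in two stages: first couple $S_n$ with $\overline{S}_n$, then couple $\overline{S}_n$ with $\tilde S_n$, and finally take a union bound over the two failure events.

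For the first coupling, note that under Poissonization the bucket loads become independent Poisson random variables, and one can couple $n$ fixed items with a $\poisson(n)$ number of items on the same probability space by the usual maximal coupling of $\poisson(n)$ with the constant $n$: the two item-sets agree unless the Poisson count deviates from $n$, which happens with probability $O(n^{-1/2})$ up to logarithmic factors, but more to the point I would control the number of ``extra'' or ``missing'' items by $|\poisson(n)-n|$, which is $O_P(\sqrt n)$. The key observation is that the summary $S_n$ (with $P=\infty$) is only affected by an item $x$ through the pair $(h_1(x), h_2(x))$, and with $B = B(n)$ buckets and geometric levels, adding or deleting $O(\sqrt n)$ items changes the summary only if one of those items lands in a bucket/level cell that is not already "saturated" by the common items. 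Since the common $n$ items already fill, in each bucket, all levels up to roughly $\log_2(n/B)$ with overwhelming probability, the discrepancy items can only matter if they reach an anomalously high level (probability $\approx B/n$ per item at the relevant levels) or hit an unexpectedly empty low cell; summing over $O(\sqrt n)$ discrepancy items and using $B \log B = o(\sqrt n)$ shows this probability is $o(1)$.

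For the second coupling, the difference between $\overline{S}_n$ and $\tilde S_n$ is exactly the dependence among the bits within a bucket: in true Poissonized PCSA the levels within a bucket are nested/monotone-ish and correlated through the shared Poisson load, whereas $\tilde S_n$ replaces them with independent Bernoullis with the correct marginals. I would argue these two laws can be coupled to agree with high probability by again conditioning on the bucket loads: within a bucket with load $\lambda_i$, the bits disagree from independent only through the (vanishingly rare) events where the realized level configuration is not the ``typical'' prefix-of-ones pattern, and the number of buckets in which anything atypical happens is, in expectation, $B$ times a small per-bucket probability. The condition $B \log B = o(\sqrt n)$ is again exactly what is needed to push the expected number of bad buckets (and bad discrepancy items) to $o(1)$, so Markov's inequality finishes it.

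The main obstacle I expect is making precise the claim that ``adding or deleting $O(\sqrt n)$ items, or replacing correlated level-bits by independent ones, changes nothing with high probability'': this requires a careful quantitative description of which bucket/level cells are ``frozen'' (deterministically $1$ or deterministically irrelevant) given the bulk of the $n$ items, and a tail bound showing the discrepancy items avoid the non-frozen cells. Concretely I would (i) fix a threshold level $j^\star = \log_2(n/B) - c\log\log n$ below which every cell is occupied w.h.p. and a threshold $j^{\star\star} = \log_2(n) + c\log\log n$ above which no cell among the $n+O(\sqrt n)$ items is occupied w.h.p., (ii) show only the $O(B \log\log n)$ cells in the window $[j^\star, j^{\star\star}]$ per bucket are ``live,'' (iii) bound the probability that any discrepancy item or any dependence-induced flip touches a live cell by a union bound of size $O(\sqrt n) \cdot O(B \log n / n) + O(B)\cdot o(1)$, which is $o(1)$ under the hypothesis. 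The bookkeeping of these three thresholds and the union bound is routine but delicate, and getting the $\log$ factors to line up with $B \log B = o(\sqrt n)$ is where the real work lies.
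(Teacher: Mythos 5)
Your first coupling (between $S_n$ and the Poissonized sketch $\overline{S}_n$) is essentially the paper's argument: couple the item sets so that they differ in $O_P(\sqrt n)$ items, observe that a new item modifies a bucket already holding $v$ items with probability $O(1/v)$ (the paper isolates this as a lemma by splitting the sum $\sum_\ell 2^{-\ell}(1-2^{-\ell})^v$), and conclude via a union bound over the discrepancy items that nothing changes, using $B\log B = o(\sqrt n)$. Your ``frozen window of live levels'' picture is a serviceable variant of the same idea.

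The second coupling, however, contains a genuine conceptual gap. You assert that the difference between $\overline{S}_n$ and $\tilde S_n$ is residual \emph{dependence} among the level bits within a bucket, ``correlated through the shared Poisson load.'' This is not the issue: by Poisson splitting, the cell counts of the Poissonized sketch are already mutually independent Poisson variables, so $\overline{S}_n$ already has independent bits. What actually separates $\overline{S}_n$ from $\tilde S_n$ is the \emph{marginal} distribution of each bit: the composite marginal likelihood is built from $\Pr(C_{ij}=0)=(1-\rho_{ij})^n$ (the multinomial marginal), whereas the Poissonized cell is empty with probability $e^{-n\rho_{ij}}$. The paper's second stage couples these two Bernoulli families via the inverse-CDF method and bounds the per-cell disagreement probability by $\exp(-n/(B2^j))\cdot n/(B^2 2^{2j+1})$, then carefully splits the union bound over the infinitely many levels at $j\approx\log_2\bigl(\tfrac{n/B}{2\log(n/B)}\bigr)$ to show the total is $o(1)$ when $n/B\to\infty$. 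Your proposal never confronts this Poisson-versus-binomial marginal discrepancy, and the argument you do sketch (about ``atypical prefix-of-ones patterns'' per bucket) is aimed at a dependence that is not present; as written it would not close the proof. To repair it, you should (i) invoke Poisson splitting to get independence of $\overline{S}_n$ for free, and (ii) supply the quantitative marginal coupling and the level-split union bound.
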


\begin{corollary}
\label{cor:asymptotic efficiency}
The composite likelihood estimator of the SFM sketch's cardinality is asymptotically efficient in the asymptotic regime in Theorem \ref{thm:sketch convergence in probability}.
\end{corollary}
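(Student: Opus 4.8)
The plan is to push the problem onto the Poissonised, independent-bit surrogate furnished by Theorem~\ref{thm:sketch convergence in probability}, on which the composite marginal likelihood is the exact likelihood, invoke the standard large-sample theory of maximum likelihood there, and then transfer the conclusion back to the actual SFM sketch because the two models are asymptotically indistinguishable.

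First I would set up the coupling. Apply $\F_{p,q}$ bitwise, using a single shared realisation of the flipping randomness, to each of $S_n$, $\tilde S_n$, $\overline S_n$, and write $T_n = \M_{p,q}(S_n)$, $\tilde T_n = \M_{p,q}(\tilde S_n)$. Since Theorem~\ref{thm:sketch convergence in probability} gives $\Pr(S_n = \tilde S_n = \overline S_n) \to 1$, the same post-processing yields $\Pr(T_n = \tilde T_n) \to 1$, hence the composite maximum likelihood estimator computed from $T_n$ equals the one computed from $\tilde T_n$ with probability tending to $1$. So it suffices to establish asymptotic efficiency for the estimator based on $\tilde T_n$, provided we also check that the Fisher information of the $\tilde T_n$-model matches, asymptotically, that of the genuine SFM model $T_n$; the latter holds because the two models have identical one-bit marginals and, by the coupling, total-variation distance tending to $0$, so their local information geometry coincides in the limit.

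On the surrogate the bits of $\tilde S_n$ are independent by construction, and therefore so are the bits of $\tilde T_n = \M_{p,q}(\tilde S_n)$; consequently the exact log-likelihood of $\tilde T_n$ for $n$ is precisely $\ell_{p,q}(n; \tilde T_n)$ and the composite MLE $\tilde{\hat n}$ is the exact MLE. I would then run the usual program: (i) consistency $\tilde{\hat n}/n \convp 1$ by an argmax/M-estimation argument built on the concavity and unique maximiser at $n$ of $f(\hat n) = \E[\ell_{p,q}(\hat n; T)]$ from Theorem~\ref{thm:shape-of-loglik}; (ii) a second-order Taylor expansion of the score, giving $\tilde{\hat n} - n = -\ell'_{p,q}(n; \tilde T_n)/\ell''_{p,q}(n; \tilde T_n) + (\text{remainder})$; (iii) a triangular-array central limit theorem for the score $\ell'_{p,q}(n;\tilde T_n)$ — a sum of independent, bounded, mean-zero bit contributions under correct specification — together with convergence, in the same normalisation, of $-\ell''_{p,q}(n;\tilde T_n)$ to the Fisher information $I_n$; and (iv) uniform control of the third-derivative remainder on a shrinking neighbourhood of $n$. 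Because the model is correctly specified on the surrogate, the second Bartlett identity makes the Godambe (``sandwich'') information equal to the Fisher information, so $\sqrt{I_n}\,(\tilde{\hat n} - n) \convd N(0,1)$: the estimator attains the Cramér–Rao bound and is asymptotically efficient. By the coupling of the previous paragraph, the same limit holds for the composite likelihood estimator of the SFM sketch.

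The main obstacle is steps (iii)–(iv): the effective ``sample size'' is the number of informative bits, which is $\Theta(B(n)) \to \infty$, so this is a growing-dimension / triangular-array MLE problem rather than a classical fixed-parameter one, and the $P = \infty$ levels must be truncated. I would handle this by isolating the $O(\log n)$ levels near $j \approx \log_2(n/B)$ that carry non-negligible information (the hypothesis $B \log B = o(\sqrt n)$ keeps the relevant occupancy fractions bounded away from $0$ and $1$ and makes the deep-level tail information vanish), verifying a Lyapunov condition for the score sum, proving a uniform law of large numbers for $\ell''_{p,q}$ over a neighbourhood of $n$, and confirming that $I_n$ grows fast enough for $\hat n$ to be consistent at the implied $1/\sqrt{I_n}$ rate. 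Steps (i)–(ii) and the Bartlett identity are routine given Theorem~\ref{thm:shape-of-loglik} and the explicit derivative formulas recorded above.
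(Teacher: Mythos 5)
Your proposal follows essentially the same route as the paper's proof: couple $T_n$ and $\tilde T_n$ by applying identical randomized-response noise to the asymptotically equal sketches from Theorem~\ref{thm:sketch convergence in probability}, observe that the composite likelihood is the exact likelihood of the independent-bit surrogate $\tilde T_n$, and transfer the asymptotic efficiency of the exact MLE back to the true SFM sketch. The paper's argument is in fact terser than yours --- it simply asserts that the exact MLE on $\tilde T_n$ is asymptotically efficient and stops there, whereas you correctly identify the triangular-array, growing-$B$, and level-truncation issues that a fully rigorous justification of that final step would require.
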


We outline the proofs and provide details in Section~\ref{sec:sketch-proofs} of Appendix~\ref{sec:proofs}. The main difficulty is that entries in an SFM summary are dependent, since each item can only be allocated to one cell.
By constructing a coupling between sketches $S_n, \overline{S}_n$ with dependent and independent entries, we show they are asymptotically equal. In these coupled processes, the bucket with maximum difference in item allocations can only differ by only a small amount, $O_p(\sqrt{n} /B \log B)$. By showing an item updates its bucket's sketch values with probability $O(1/v_i)$
(where $v_i$ is the number of items in bucket $i$),
we conclude the coupled sketches are, in fact, equal with probability going to 1
when the average bucket allocation $v_i \approx n/B$ grows fast enough to make the small differences in item allocation irrelevant.
Since $\overline{S}_n, \tilde{S}_n$ have independent bits, we couple them via the inverse CDF method
and directly bound the probability that they differ.
Since the sketches are the same asymptotically, applying the exact same RR noise to them implies their privatized versions are the same, and any estimator on them has the same asymptotic sampling distribution.
Therefore, the asymptotically efficient MLE for the independent-entry sketch, i.e., the composite likelihood estimator, is also asymptotically efficient for the true SFM summary $S_n$.

\begin{remark}
\label{remark:poissonization}
This result also proves the MLE derived under the approximation that each bin has $\poisson(n/B)$ items is asymptotically efficient. This can be extended to HyperLogLog and other sketches to show pseudo-likelihood based estimators \citep{ertl2017new, ting2019approximate} that have good empirical properties are, in fact, asymptotically optimal.
\end{remark}

\textbf{Error Estimation.} Like the Fisher information matrix for MLE's, the inverse Godambe (or sandwich) information provides a consistent estimate of the estimator's variance. The Godambe information is $G(n) = H(n) J^{-1}(n) H(n)$
where $-J(n)$ is the Hessian of the expected log composite likelihood at $n$ and 
$H(n) = \Var(\ell_{p,q}'(n, T))$ is the variance of the composite score functions.
In the non-private case, \citet{ting2019approximate} demonstrated that composite marginal likelihood variance estimates for HyperLogLog based on Fisher information and Godambe information are nearly identical for large cardinalities and that the Fisher information overestimates the variance at small cardinalities due to the negative dependence of buckets. Figure \ref{fig:error-by-n} shows this overestimation is much less pronounced when independent randomized response noise is added. Thus, we use only the Hessian to define the \textit{estimated standard error} as
\begin{align}
\label{eq:fisher-error}
\estse_{p, q}(B, P, n) &= \sqrt { 1 / \E[ -\ell''_{p, q}(n; T) ] } \\
    &= \left[ B (p-q) \sum_{j=1}^P (\log \gamma_j)^2 \gamma_j^n \left( \tfrac{p}{p - (p-q) \gamma_j^n} - \tfrac{1-p}{1 - p + (p-q) \gamma_j^n} \right) \right]^{-1/2} , \nonumber
\end{align}
where $T = \M_{p, q}(S)$ for a random PCSA sketch $S$ of size $B \times P$ and cardinality $n$.
Figure \ref{fig:head-to-head} in Section \ref{sec:evaluation} demonstrates empirically that our error estimates $\estse_{p, q}(B, P, n)$ are a good approximation for the error.

\section{Evaluation}
\label{sec:evaluation}

We evaluate our methods on both real-world and synthetic datasets. 
We demonstrate empirically that the SFM summary is the first mergeable $\eps$-DP distinct counting sketch with practical performance, since the errors for the \citet{pagh2020efficient} sketch are impractically large. 
Among private sketches, our novel randomized-merge sketch construction dominates the deterministic-merge sketches. Thus, 
our improvements on the construction, estimation, and privacy analysis yield practical gains.
Moreover, our theoretical error closely approximates empirical error.

\subsection{Experiment Setup}
We consider four different private distinct counting sketches in our experiments. Among our methods, \ussym{} pairs $\Msym_\eps$ with our randomized merge procedure, while \usxor{} pairs $\Mxor_\eps$ with the deterministic \textit{xor} merge. Both SFM methods use the estimator of Section~\ref{sec:sketches}. We compare these methods against the sketch and estimator of \citet{pagh2020efficient} implemented two ways: \psloose{} constructs sketches using $\Mps_\eps = \M_{p, q}$ with $p = 1/2, q = 1 / (2 + \eps)$ as prescribed in \citet{pagh2020efficient}, while \pstight{} uses the tightened $\Mxor_\eps$ (Definition~\ref{def:xor-rr}). By Corollary~\ref{cor:equivalent-eps}, the sketch construction of \pstight{} at $\eps$ is equivalent to that of \psloose{} at $\eps' = 2(e^\eps - 1)$.

We also consider two non-private sketches as baseline comparisons in our final experiment, noting that we should not expect the accuracy of a private sketch to be as strong as a non-private one. 
In what follows, \textit{FM85} denotes non-private PCSA using the estimator of \citet{flajolet1985probabilistic}, and \textit{HLL} denotes HyperLogLog \citep{flajolet2007hyperloglog}. We compare HLL sketches against PCSA sketches at equal bucket counts, noting that the HLL sketches are smaller per bucket than the corresponding PCSA sketches.

We measure estimation error primarily in the form of \textit{relative root mean squared error (RRMSE)}, defined as
\[
\mathrm{RRMSE}(\hat n_1, \dots, \hat n_m; n) = \frac 1n \sqrt{\frac 1m \sum_{i=1}^m(\hat n_i - n)^2} .
\]
We also measure the \textit{relative efficiency} of two methods as the ratio of their mean squared error,
\[
\mathrm{RE}(\hat{n}^{(1)}, \hat{n}^{(2)}) = \frac{\mathrm{MSE}(\hat{n}^{(2)})}{
\mathrm{MSE}(\hat{n}^{(1)})} .
\]
If two sketches have unbiased estimators, a relative efficiency of $r$ indicates that the less efficient sketch must asymptotically use $r$ times more buckets to get the same accuracy. 
This is because the asymptotic MSE (variance) decreases proportionally to $1/B$ for these estimators.

The simulations use sketches with dimensions $B=4096$, $P=24$ by default, using the xxHash64 \citep{xxhash} hash function, averaged over $m = 1000$ trials.

\textbf{Modification to \citet{pagh2020efficient}.} In our experiments, their original estimator frequently failed to produce an estimate. For a desired error tolerance $\beta$, the method computes intervals for all $P$ levels of the sketch and intersects them to produce an estimate. 
This intersection was frequently empty for small $\beta$. To patch this, we search for the smallest $\beta$ resulting in a non-trivial intersection.
We use the midpoint of this interval to estimate $n$.

\textbf{Data Sources.} 
Our experiments use both synthetic and real data. 
Synthetic data consist of random sets of integers with a fixed cardinality. 
Real data is taken from the BitcoinHeist paper \citep{akcora2019bitcoinheist}, which provides a database
of Bitcoin transactions to $n =$ 2,631,095 unique addresses. 
The dataset is available in the UCI repository under the CC BY 4.0 license.
We note that distinct-count sketches are insensitive to the value distribution of  inputs since  values are hashed as part of the processing.

\subsection{Results}

\begin{figure}
    \centering
    \includegraphics[width=.5\textwidth]{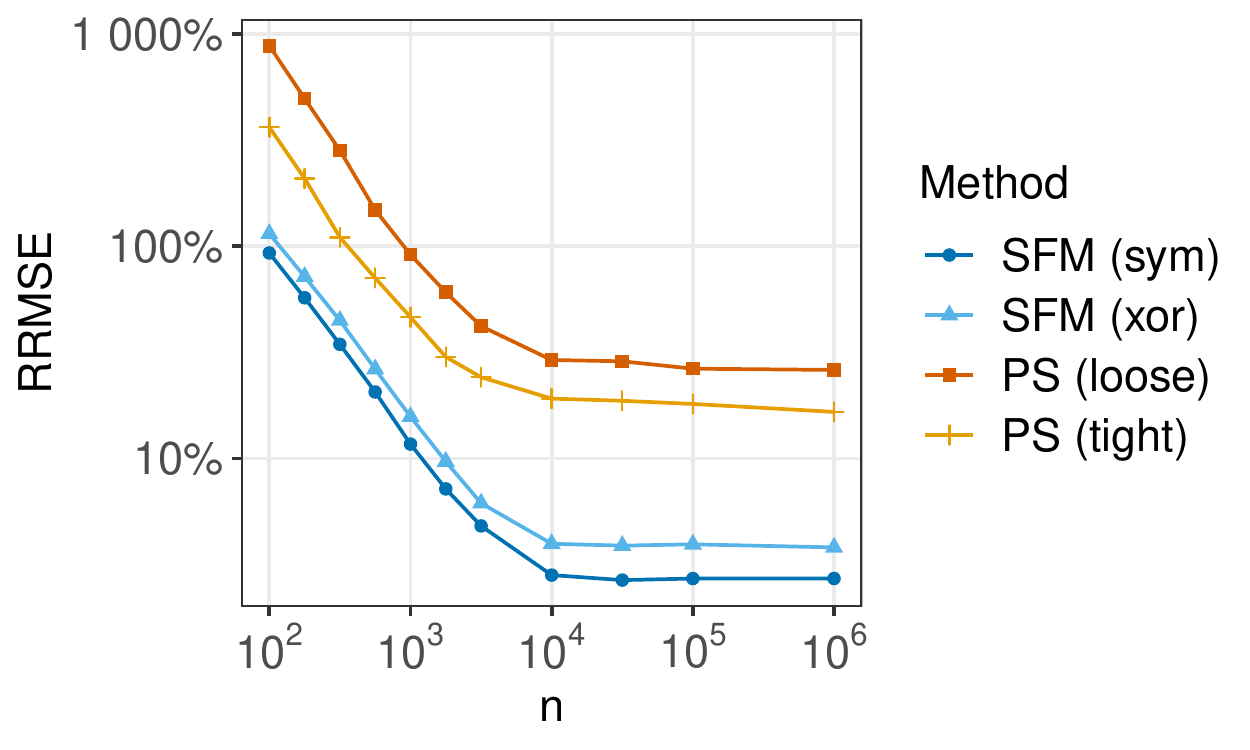}
    \caption{RRMSE vs. $n$ at $\eps = 1$ on log--log axes, compared across the four methods. RRMSE stabilizes for large $n$.}
    \label{fig:error-by-n-all}
\end{figure}

\begin{figure}
    \centering
    \includegraphics[width=.5\textwidth]{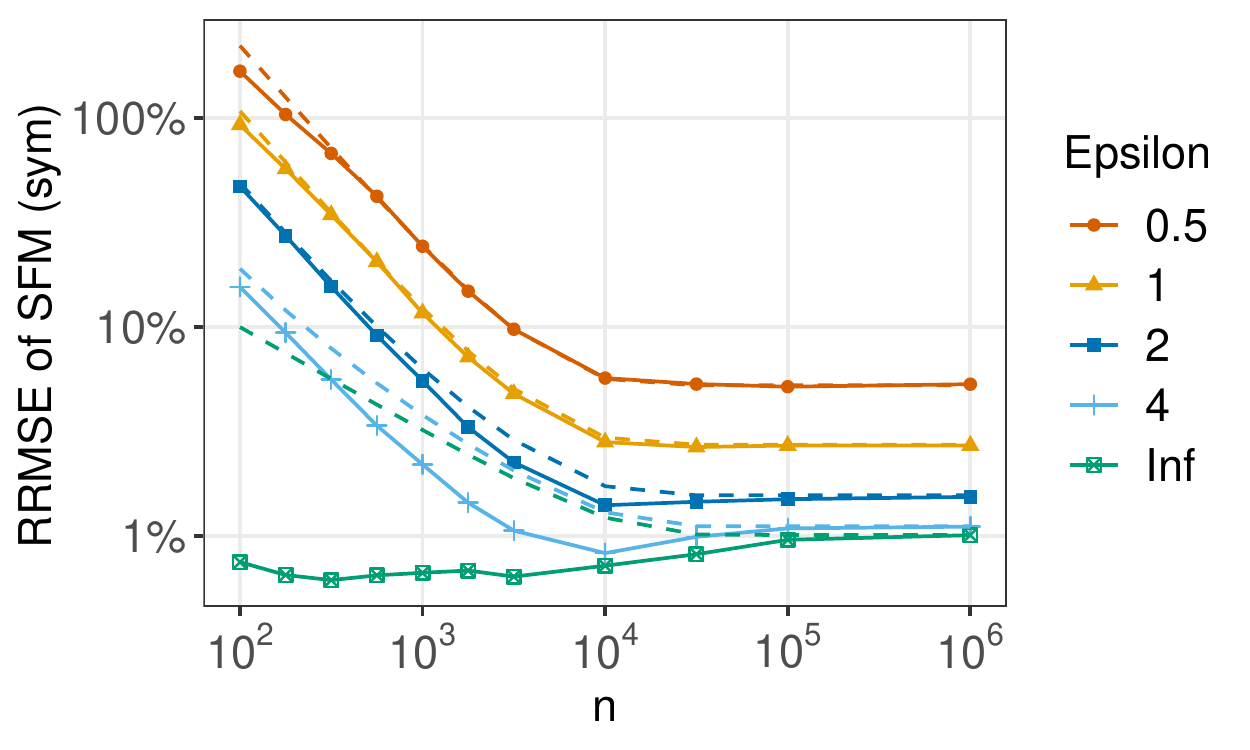}
    \caption{Dashed lines show estimated relative error, $\estse/n$, which is highly accurate for large $n$ and small $\eps$.}
    \label{fig:error-by-n}
\end{figure}

\begin{figure}
    \centering
    \includegraphics[width=.5\textwidth]{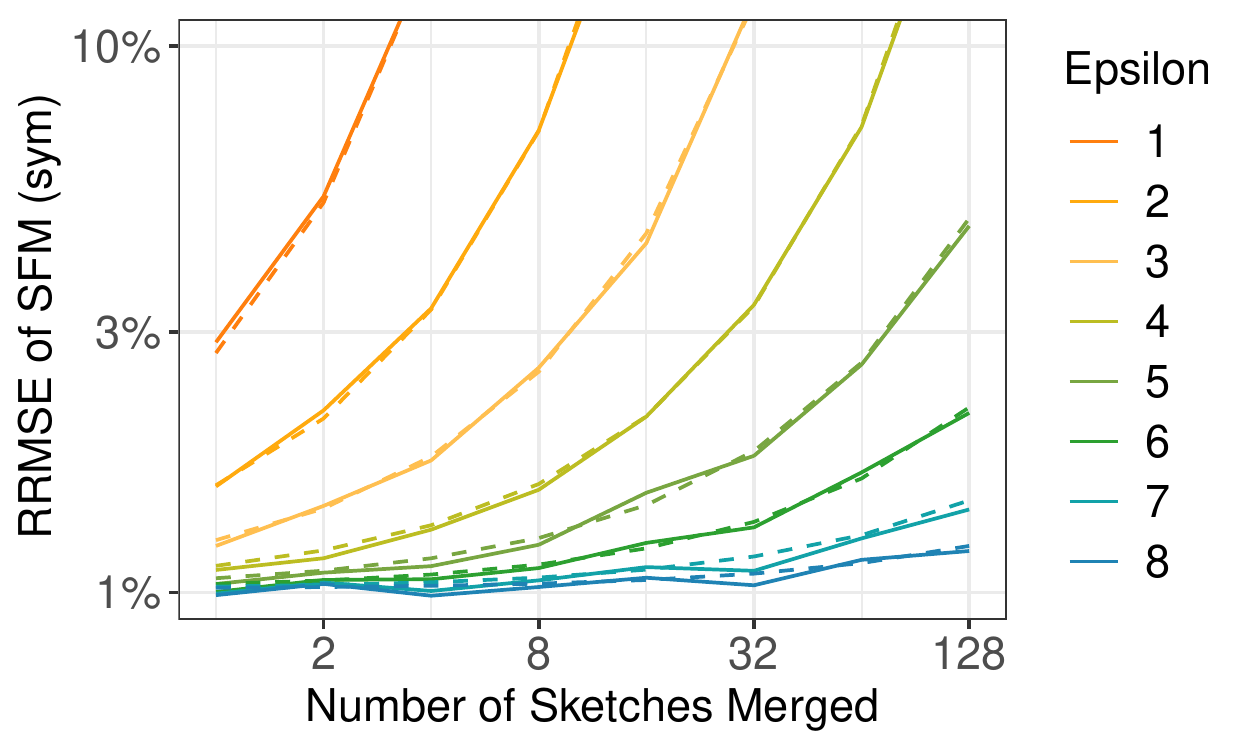}
    \caption{RRMSE at $n=10^6$ after merging a given number of \ussym{} summaries, each with a given privacy budget $\eps$. Dashed lines show estimated relative error.}
    \label{fig:merges}
\end{figure}

Figure~\ref{fig:error-by-n-all} compares the accuracy of the four private methods on synthetic data as the cardinality $n$ ranges from $10^2$ to $10^6$ given a fixed privacy budget of $\eps = 1$. 
For large cardinalities, RRMSE tends toward a fixed constant for each method. Our methods have error that is an order of magnitude better than \citet{pagh2020efficient} even after we tighten their privacy analysis.
For small cardinalities, the relative error increases as the cardinality decreases, which is expected for differentially private methods.
Figure~\ref{fig:error-by-n} compares the accuracy for multiple values of $\eps$ but only for the best sketch, \ussym{}. It also shows that the RRMSE stabilizes as $n \to \infty$ regardless of the choice of $\eps$. 
In contrast to DP methods, the SFM summary with infinite privacy budget, which is the same as PCSA with the MDL estimator of \citet{lang2017back}, yields especially accurate estimates at small $n$.
Figure~\ref{fig:error-by-n} further shows that the estimated relative error $\estse/n$ (Eq.~\ref{eq:fisher-error}) is an upper bound on the empirical error and yields a good estimate of RRMSE, especially for large cardinalities or small $\eps$.

Figure~\ref{fig:merges} demonstrates the tradeoff between merging and privacy in SFM summaries at large cardinality ($n=10^6$). Merge operations result in an accumulation of noise, requiring the use of larger privacy budgets to accommodate greater merge counts. The estimated relative error here is calculated according to Remark~\ref{remark:k-way-merge} and once again closely matches empirical error.

We also compare private methods against the real-world BitcoinHeist data over a variety of privacy budgets $\eps$, ranging from 0.25 to 4. 
  The left panel of Figure~\ref{fig:head-to-head} shows the relative efficiency of \ussym{} as compared with the other private methods. \ussym{} is uniformly more efficient than the PS estimators by at least an order of magnitude. The $100\times$ better efficiency compared to \psloose{} implies \ussym{} would require just 1\% of the space to achieve the same error.
Moreover, \ussym{} outperforms \usxor{} with relative efficiency tending toward 4 for larger $\eps$, indicating that for larger privacy budgets, the randomized-merge \ussym{} can achieve comparable accuracy to \usxor{} in as little as one fourth the space. The estimation error from this experiment is depicted in absolute terms in the right panel, where we again see that the estimated relative error for SFM is a good approximation for RRMSE.

\begin{figure}
    \centering
    \includegraphics[width=.6\textwidth]{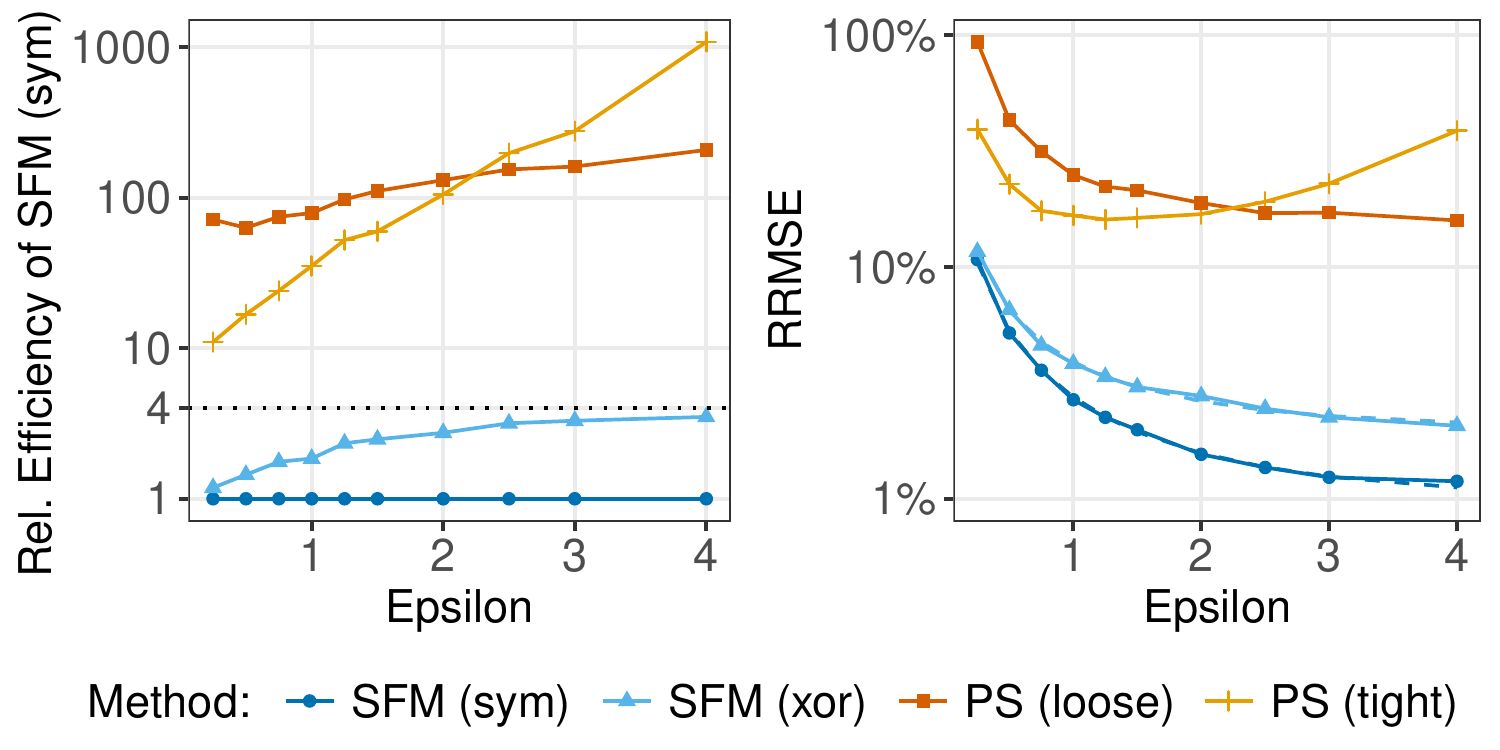}
    \caption{
    \textit{(left)} Relative efficiency of \ussym{} vs. other methods on BitcoinHeist ($n \approx 2.6\mathrm{M}$). SFM is far more efficient than PS. \ussym{} is 4x more efficient than \usxor{} for larger privacy budgets. \textit{(right)} RRMSE vs. $\eps$. For SFM, dashed lines show estimated relative error match the true error.}
    \label{fig:head-to-head}
\end{figure}

\begin{figure}
    \centering
    \includegraphics[width=.5\textwidth]{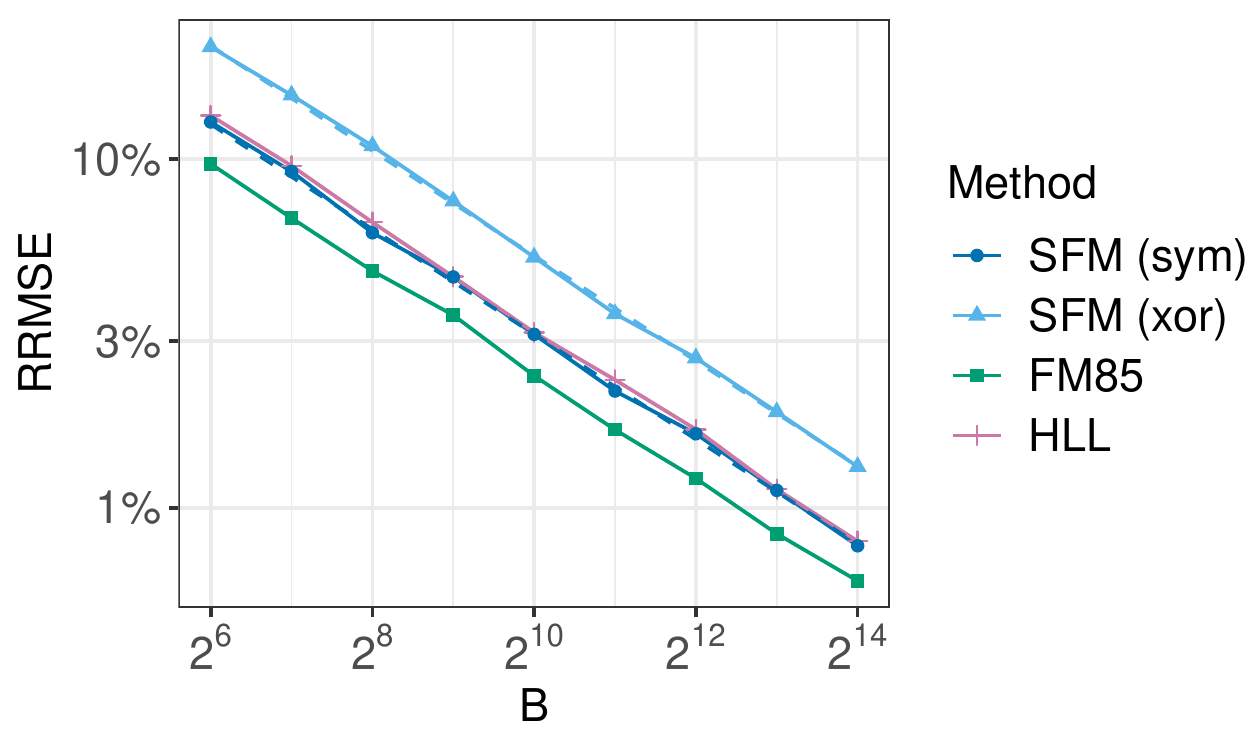}
    \caption{RRMSE at $n = 10^6$ vs. bucket count $B$ for private sketches at $\eps = 2$ vs. common non-private alternatives (on log--log axes). For SFM, dashed lines show estimated relative error. For all methods, RRMSE scales with $B^{-0.5}$.}
    \label{fig:buckets}
\end{figure}

Finally, we compare SFM to popular non-private alternatives
and show that error similarly decreases with the bucket count. Using synthetic data with cardinality $n = 10^6$, we construct sketches of varying bucket count $B$, using a privacy budget of $\eps = 2$ for SFM. Figure~\ref{fig:buckets} shows RRMSE as a function of $B$ for each method. Like the familiar non-private distinct counting sketches, our RRMSE decreases with $B^{-0.5}$. Thus, like non-private sketches, the RRMSE of our DP summaries can be easily characterized by a simple formula $c_{\eps} / \sqrt{B}$ 
at large cardinalities, where $c_{\eps}$ is a constant specific to a method and privacy budget.

\section{Discussion and Conclusion}
\label{sec:discussion}
The Sketch-Flip-Merge summaries demonstrate dramatic improvement over the current state-of-the-art mergeable and differentially private distinct-count sketches. This is achieved through novel merge algorithms (Theorem~\ref{thm:sym-rr-merge} and Section~\ref{sec:general-boolean-ops}), asymptotically optimal estimation (Section~\ref{sec:sketches}), and an improved privacy analysis (Corollary~\ref{cor:equivalent-eps}).

An important limitation in mergeable private summaries is the inherent tension between privacy and mergeability. While both are attainable, repeated merging in the private setting degrades accuracy. This tradeoff, argued in the general distinct-count setting by \citet{desfontaines2018cardinality}, is explicitly quantified for SFM in Remark~\ref{remark:k-way-merge} and Figure~\ref{fig:merges}.

Finally, we note the generality of some of our findings.
In particular, our methods for aggregating noisy binary data provide fundamental machinery and a quantification of the noise-compounding effects of bitwise operations under randomized response that apply to a wide array of problems, particularly in the privacy-preserving space.

\pagebreak


{
    \bibliographystyle{plainnat}
    \bibliography{sources}
}

\pagebreak


\appendix

\section{Proofs of Main Results}
\label{sec:proofs}

\subsection{Additional Notation}

In the proofs to follow, we use the following notation. $\mathbf 1_k$ is the column vector of $k$ ones, and $I_k$ is the $k \times k$ identity matrix (with subscripts omitted when dimensions are clear).  We denote by $e_i$ the $i$-th elementary basis vector, i.e., the vector whose entries are all 0, except at position $i$, where the entry is 1. We use $\langle \cdot, \cdot \rangle$ to denote the inner product between two vectors.

\subsection{Proofs of Results in Section~\ref{sec:privacy}}

\begin{proof}[Proof of Theorem~\ref{thm:dp-bit-flip}]
The proof follows along the same lines as \citet[Theorem 2]{rappor}. 
We first prove that $\F_{p, q}$ is $\eps$-DP if and only if $p \in (0, 1)$ and $\max \{ p/q, (1-q)/(1-p) \} \leq e^\eps$. For this to hold, we must satisfy
\[
\Pr(\F_{p,q}(x) = y) \leq e^\eps \, \Pr(\F_{p,q}(1-x) = y)
\]
for all $x, y \in \{0, 1\}$. 
If $p = q$,  $\F_{p,p}$ is a mechanism that ignores its inputs and outputs a randomly chosen bit value and so this holds trivially. 
Assuming $p \neq q$, this holds if
\[
\Pr(\F_{p,q}(1-x) = y) > 0 \quad \text{and} \quad \frac{\Pr(\F_{p,q}(x) = y)}{\Pr(\F_{p,q}(1-x) = y)} \leq e^\eps .
\]
The first condition is equivalent to the requirement that $p, q \in (0, 1)$, and the second condition is equivalent to $\max \{ p/q, (1-q)/(1-p) \} \leq e^\eps$.

For $\M_{p,q}$, note that the entries of $\M_{p,q}$ are independent by construction, so for $x, x' \in \{0, 1\}^d$ differing only on one bit $x_j = 1 - x'_j$ and some $y \in \{0, 1\}^d$, we have:
\[
\frac{\Pr(\M_{p,q}(x) = y)}{\Pr(\M_{p,q}(x') = y)} = \frac{\prod_i \Pr(\F_{p,q}(x_i) = y_i)}{\prod_i \Pr(\F_{p,q}(x'_i) = y_i)} = \frac{\Pr(\F_{p,q}(x_j) = y_j)}{\Pr(\F_{p,q}(1 - x_j) = y_j)}.
\]
As shown above, this quantity is bounded by $e^\eps$ under the stated conditions on $p$ and $q$.
\end{proof}

Theorem~\ref{thm:dp-bit-flip} is used below to demonstrate that both $\Mxor_\eps$ and $\Msym_\eps$ satisfy $\eps$-DP via Corollaries~\ref{thm:xor-rr-dp} and \ref{thm:sym-rr-dp}.

\subsection{Proofs of Results in Section~\ref{sec:merging-deterministic}}

Before proving Theorem~\ref{thm:unique-deterministic-merge}, we provide a helpful result:

\begin{fact}
\label{thm:bernoulli-ops}
Let $X \sim \bernoulli(p), Y \sim \bernoulli(q)$ be independent. Then:
\begin{enumerate}
    \item $X \land Y \sim \bernoulli(pq)$.
    \item $X \lor Y \sim \bernoulli(p(1-q) + q(1-p) + pq)$.
    \item $X \lxor Y \sim \bernoulli(p(1-q) + q(1-p))$. Moreover, if $p = \frac 12$ or $q = \frac 12$, then $X \lxor Y \sim \bernoulli(\frac 12)$.
\end{enumerate}
\end{fact}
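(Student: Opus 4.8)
The statement to prove is Fact~\ref{thm:bernoulli-ops}, which collects the distributions of $X \land Y$, $X \lor Y$, and $X \lxor Y$ for independent Bernoulli variables.

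\textbf{Approach.} The plan is to compute each distribution directly from the definition of the logical operation and independence. Since each of $X \land Y$, $X \lor Y$, $X \lxor Y$ takes values in $\{0,1\}$, it suffices to identify the probability that each equals $1$; independence lets us factor the relevant joint probabilities as products of marginals.

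\textbf{Key steps.} First, for the \emph{and} case: $X \land Y = 1$ if and only if $X = 1$ and $Y = 1$, so by independence $\Pr(X \land Y = 1) = \Pr(X=1)\Pr(Y=1) = pq$, giving $X \land Y \sim \bernoulli(pq)$. Second, for \emph{xor}: $X \lxor Y = 1$ if and only if exactly one of $X, Y$ equals $1$, so
\[
\Pr(X \lxor Y = 1) = \Pr(X=1)\Pr(Y=0) + \Pr(X=0)\Pr(Y=1) = p(1-q) + q(1-p),
\]
using independence. For the ``moreover'' clause, substitute $p = \tfrac12$ (or symmetrically $q = \tfrac12$): then $p(1-q) + q(1-p) = \tfrac12(1-q) + \tfrac12 q = \tfrac12$, so $X \lxor Y \sim \bernoulli(\tfrac12)$. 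Third, for \emph{or}: $X \lor Y = 1$ unless both are $0$, so one can either compute $\Pr(X \lor Y = 1) = 1 - \Pr(X=0)\Pr(Y=0) = 1 - (1-p)(1-q)$, or expand the three disjoint cases ($X=1,Y=0$; $X=0,Y=1$; $X=1,Y=1$) to get $p(1-q) + q(1-p) + pq$; these two expressions agree after expanding $1-(1-p)(1-q) = p + q - pq$. This yields $X \lor Y \sim \bernoulli(p(1-q) + q(1-p) + pq)$, matching the stated form.

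\textbf{Main obstacle.} There is essentially no obstacle here: this is a routine verification, and the only care needed is bookkeeping to make sure the stated parametrization in item~2 (written as a sum of three disjoint-case terms rather than as $p+q-pq$) is matched exactly, and that the ``moreover'' simplification in item~3 is presented cleanly. The result is stated as a \texttt{fact} precisely because it is elementary and is invoked as a black box in the proof of Theorem~\ref{thm:unique-deterministic-merge}.
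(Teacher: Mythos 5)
Your proposal is correct and follows essentially the same route as the paper: compute $\Pr(\cdot = 1)$ for each operation directly from independence, with the \emph{or} case obtained by combining the disjoint events already computed for \emph{and} and \emph{xor}. You additionally spell out the ``moreover'' clause for $p = \tfrac12$, which the paper leaves implicit; this is a harmless and slightly more complete presentation of the same argument.
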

\begin{proof}
\[
\begin{aligned}
\Pr(X \land Y = 1) &= \Pr(X=1, Y=1) \\
&= pq, \\
\Pr(X \lxor Y = 1) &= \Pr(X = 1, Y = 0) + \Pr(X = 0, Y = 1) \\
&= p(1-q) + q(1-p), \\
\Pr(X \lor Y = 1) &= \Pr(X \land Y  = 1) + \Pr(X \lxor Y = 1) .
\end{aligned}
\]
\end{proof}

\begin{proof}[Proof of Theorem~\ref{thm:unique-deterministic-merge}]
We begin with some simple necessary conditions for (1)--(3) to hold. 
From Theorem~\ref{thm:dp-bit-flip}, we know that $p_1, p_2, q_1, q_2$ must lie in $(0, 1)$. Another necessary condition is that $p_1 \neq q_1$ and $p_2 \neq q_2$, as otherwise (3) is violated. For this reason, we can assume $p_1 \neq q_1$ and $p_2 \neq q_2$.

Modulo negation, there exist four symmetric operations $\{0, 1\}^2 \to \{0, 1\}$: and ($\land$), or ($\lor$), xor ($\lxor$), and the trivial operator that maps all inputs to 0. We will now rule out the operators other than $\lxor$.

\textit{(Trivial Operator)} Let $\circ$ denote the operator $x \circ y = 0$. Then $f_1(x) \circ f_2(y) = 0$ for any $x, y$. If (2) holds, then $f_3(0) = f_3(1) = 0$, violating (3).

\textit{(And)} Let $\circ = \land$, and assume (2) holds. Then we must have:
\[
f_1(0) \land f_2(1) \overset D= f_3(1) \overset D= f_1(1) \land f_2(1) .
\]
By Fact~\ref{thm:bernoulli-ops}, this implies that:
\[
q_1 p_2 = p_1 p_2 .
\]
Since we assumed $p_2 \neq 0$, this implies that $p_1 = q_1$, in contradiction to our assumption.

\textit{(Or)} Let $\circ = \lor$, and assume (2) holds. Then we must have:
\[
f_1(0) \lor f_2(1) \overset D= f_3(1) \overset D= f_1(1) \lor f_2(1) .
\]
By Fact~\ref{thm:bernoulli-ops}, this implies that:
\[
p_1 (1-q_2) + q_2 (1-p_1) + p_1 q_2 = p_1 (1-p_2) + p_2 (1-p_1) + p_1 p_2.
\]
After rearranging terms, we obtain:
\[
0 = (1-p_1)(p_2 - q_2) .
\]
Since we assumed $p_1 \neq 1$, this implies that $p_2 = q_2$, in contradiction to our assumption.

\textit{(Xor)} 
Now we will show that when $\circ = \lxor$, we must have $p_1 = p_2 = \frac 12$. Assuming condition (2) holds, we have:
\[
f_1(0) \lxor f_2(1) \overset D= f_3(1) \overset D= f_1(1) \lxor f_2(1),
\]
which implies (by Fact~\ref{thm:bernoulli-ops}):
\[
q_1 (1-p_2) + p_2 (1 - q_1) = p_1 (1 - p_2) + p_2 (1 - p_1) .
\]
Rearranging terms yields:
\[
p_2 (p_1 - q_1) = (1 - p_2) (p_1 - q_1) .
\]
Since we assumed $p_1 \neq q_1$, we must have $p_2 = \frac 12$. A similar argument shows $p_1 = \frac 12$.
\end{proof}

\begin{proof}[Proof of Lemma~\ref{thm:xor-rr-dp}]
We need to show that $p/q \leq e^\eps$ and $(1-q)/(1-p) \leq e^\eps$
for $p=1/2$ and $q=1/(2e^\eps)$.
Clearly, $p/q = e^\eps$, satisfying the first component. Next, consider the expression
\[
\begin{aligned}
e^\eps - \frac{1-q}{1-p} &= e^\eps - \frac{1 - \frac12 e^{-\eps}}{\frac 12} \\
    &= e^\eps - (2 - e^{-\eps}) \\
    &= e^\eps + e^{-\eps} - 2 \\
    &= (e^{\eps/2} - e^{-\eps/2})^2 \\
    &\geq 0.
\end{aligned}
\]
Therefore, $(1-q)/(1-p) \leq e^\eps$ as required.
\end{proof}

\begin{proof}[Proof of Theorem~\ref{thm:xor-rr-merge}]
Since the entries of $\Mxor(\cdot)$ are independent, it suffices to show this holds for $\Mxor$ applied to arbitrary $x_i, y_i \in \{0, 1\}$. Observe that if $x_i = 1$ or $y_i = 1$, then $x_i \lor y_i = 1$, and so $\Mxor_{\eps^*}(x_i \lor y_i) \sim \bernoulli(\frac 12)$. On the other hand, since we know $x_i = 1$ or $y_i = 1$, then $\Mxor_{\eps_1}(x_i) \sim \bernoulli(\frac 12)$ or $\Mxor_{\eps_2}(y_i) \sim \bernoulli(\frac 12)$, and so by Fact~\ref{thm:bernoulli-ops}, $\Mxor_{\eps_1}(x_i) \lxor \Mxor_{\eps_2}(y_i) \sim \bernoulli(\frac 12)$.

Thus all that remains to show is that $\Mxor_{\eps^*}(x \lor y) \overset D= \Mxor_{\eps_1}(x_i) \lxor \Mxor_{\eps_2}(y_i)$ when $x_i = y_i = 0$. In this case, $\Mxor_{\eps_1}(x_i) \sim \bernoulli(\frac 12 e^{-\eps_1})$, and $\Mxor_{\eps_2}(y_i) \sim \bernoulli(\frac 12 e^{-\eps_2})$. By Fact~\ref{thm:bernoulli-ops}, we have that:
\[
\Mxor_{\eps_1}(x_i) \lxor \Mxor_{\eps_2}(y_i) \sim \bernoulli(q^*),
\]
where
\[
\begin{aligned}
q^* &=\frac 12 e^{-\eps_1} \left(1 - \frac 12 e^{-\eps_2} \right) + \frac 12 e^{-\eps_2} \left(1 - \frac 12 e^{-\eps_1} \right) \\
&= \frac 12 \left( e^{-\eps_1} + e^{-\eps_2} - e^{-(\eps_1 + \eps_2)} \right) \\
&= \frac 12 \exp\left( - \left[ \underbrace{-\log\left( e^{-\eps_1} + e^{-\eps_2} - e^{-(\eps_1 + \eps_2)} \right)}_{\eps^*} \right] \right) \\
&= \frac 12 e^{-\eps^*}.
\end{aligned}
\]
Finally, since $x_i \lor y_i = 0$, we have that $\Mxor_{\eps^*}(x_i \lor y_i) \sim \bernoulli(\frac 12 e^{-\eps^*})$.
\end{proof}

\subsection{Proofs of Results in Section~\ref{sec:merging-randomized}}

\begin{proof}[Proof of Lemma~\ref{thm:sym-rr-dp}]
Since $p = e^{\eps}/(e^\eps + 1)$ and $q = 1-p = 1/(e^{\eps} + 1)$, we have $p/q = (1-q)/(1-p) = e^\eps$, and so the result follows immediately from Theorem~\ref{thm:dp-bit-flip}.
\end{proof}

We now state a more general form of Theorem~\ref{thm:sym-rr-merge} for proof. Where Theorem~\ref{thm:sym-rr-merge} gives a randomized merge for 2 bits, which may be invoked repeatedly to merge $k > 2$ bits, Theorem~\ref{thm:sym-merge-general} considers a simultaneous merge of $k \geq 2$ bits. Beyond simply serving to prove the original pairwise theorem, this generalization shows that nothing is gained in $\eps^*$ (i.e., the noise level of the final sketch) by simultaneously merging $k$ bits vs. performing repeated pairwise merges.

\begin{theorem}
\label{thm:sym-merge-general}
Fix an integer $k \geq 2$. For $i \in [k]$, assume $\eps_i > 0$. Let 
\[
q(\eps) = \frac{1}{e^\eps + 1},  \ K_{\eps} = \begin{bmatrix} 1 - q(\eps) & q(\eps) \\ q(\eps) & 1 - q(\eps) \end{bmatrix},
\]
\[
\eps^* = -\log \left(1 - \prod_{i=1}^k (1 - e^{-\eps_i}) \right), \ q^* = q(\eps^*),
\]
and let $v^* \in \R^{2^k}$ be the vector whose first entry is $q^*$ with all other entries $1 - q^*$. Let
\[
t = (t_{\dots01}, t_{\dots10}, t_{\dots11}, \dots)^T = (K_{\eps_1}^{-1} \otimes \dots \otimes K_{\eps_k}^{-1}) v^*,
\]
\[
g(x_1, \dots, x_k) \sim \bernoulli(t_{x_1 \dots x_k}).
\]
Then $g(\Msym_{\eps_1}(x_1), \dots, \Msym_{\eps_k}(x_k)) \overset D= \Msym_{\eps^*}(x_1 \lor \dots \lor x_k)$.
\end{theorem}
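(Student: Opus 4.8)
The plan is to turn the distributional requirement into a single linear system over the $2^k$-simplex, solve it by inverting a Kronecker product, and then certify that the (necessarily unique) solution is a genuine stochastic vector. Because $\Msym_{\eps_i}$, $\Msym_{\eps^*}$ and $g$ all act coordinatewise and independently, it suffices to argue for single bits $x_1,\dots,x_k$. Let $K_{\eps_i}$ be the $2\times 2$ transition matrix of $\Msym_{\eps_i}$, so $(K_{\eps_i})_{xa}=\Pr(\Msym_{\eps_i}(x)=a)$, and identify $\{0,1\}^k$ with the coordinate set of a $2^k$-vector in the standard order (with $0^k$ first). Conditioning on the $k$ noisy inputs and using independence, for each input string $x=(x_1,\dots,x_k)$ one gets
\[
\Pr\bigl(g(\Msym_{\eps_1}(x_1),\dots,\Msym_{\eps_k}(x_k))=1\bigr)=\bigl[(K_{\eps_1}\otimes\cdots\otimes K_{\eps_k})\,t\bigr]_x ,
\]
where $t=(t_{x_1\cdots x_k})$. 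Meanwhile $\Pr(\Msym_{\eps^*}(x_1\lor\cdots\lor x_k)=1)$ equals $q^*$ when $x=0^k$ and $1-q^*$ otherwise, i.e.\ it is exactly $v^*$. Hence the asserted identity is equivalent to $(K_{\eps_1}\otimes\cdots\otimes K_{\eps_k})\,t=v^*$.

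Each $K_{\eps_i}$ is invertible since $\det K_{\eps_i}=1-2q(\eps_i)=\tanh(\eps_i/2)\neq0$ for $\eps_i>0$, so the Kronecker product is invertible with inverse $K_{\eps_1}^{-1}\otimes\cdots\otimes K_{\eps_k}^{-1}$ and the system has the unique solution $t=(K_{\eps_1}^{-1}\otimes\cdots\otimes K_{\eps_k}^{-1})v^*$, which is exactly the vector in the statement. This already yields the distributional identity \emph{provided} $g$ is well defined; the only substantive thing left is to show $t_b\in[0,1]$ for every $b\in\{0,1\}^k$, and that is where the real work lies.

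For the feasibility check I would compute $t$ explicitly. Each $K_{\eps}$ diagonalizes as $K_\eps=\tfrac12 H\,\mathrm{diag}(1,\tanh(\eps/2))\,H$, with $H$ the $2\times2$ Hadamard matrix (rows $(1,1)$ and $(1,-1)$, $H^2=2I$), so the inverse Kronecker product passes through $H^{\otimes k}$; pushing $v^*$ through it and simplifying yields, with $A=\prod_i e^{\eps_i}$, $B=\prod_i(e^{\eps_i}-1)$ (note $2A-B>A>0$) and $|b|$ the number of ones in $b$,
\[
t_b=\frac{A-(-1)^{|b|}\prod_{i:\,b_i=0}e^{\eps_i}}{2A-B}=\frac{A\bigl(1-\prod_{i:\,b_i=1}(-e^{-\eps_i})\bigr)}{2A-B}.
\]
Nonnegativity is immediate because $\bigl|\prod_{i:b_i=1}(-e^{-\eps_i})\bigr|\le1$, and $t_b\le1$ reduces, after isolating the worst case ($|b|$ odd, with the single surviving index the one of smallest $\eps_i$), to the two elementary inequalities $\prod_i(1-e^{-\eps_i})\le1-e^{-\eps_j}$ and $\prod_i u_i\ge1+\prod_i(u_i-1)$ for $u_i=e^{\eps_i}\ge1$ (the latter by a one-line induction on $k$). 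It is precisely here that the value $\eps^*=-\log(1-\prod_i(1-e^{-\eps_i}))$ is forced: any larger $\eps^*$ moves the ratio $B/(2A-B)$ the wrong way and drives some $t_b$ below $0$.

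An alternative that avoids the closed-form bookkeeping is to prove the $k=2$ case directly (a four-dimensional computation) and then induct: the partial merge $g^{(k)}(\Msym_{\eps_1}(x_1),\dots,\Msym_{\eps_k}(x_k))$ is, by the inductive hypothesis, distributed as $\Msym_{\tilde\eps}(x_1\lor\cdots\lor x_k)$ with $\tilde\eps=-\log(1-\prod_{i\le k}(1-e^{-\eps_i}))>0$ and is independent of $\Msym_{\eps_{k+1}}(x_{k+1})$, so feeding the two into the $k=2$ merge gives a valid randomized map $\{0,1\}^{k+1}\to\{0,1\}$ whose transition vector solves the $(k+1)$-fold matrix equation and hence, by uniqueness, equals the claimed $t$. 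Either way I expect the main obstacle to be the same: verifying that the unique algebraic solution $t$ genuinely lies in $[0,1]^{2^k}$.
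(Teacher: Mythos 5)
Your proposal is correct, and its skeleton is the same as the paper's: reduce to a single coordinate, encode the requirement as the linear system $(K_{\eps_1}\otimes\cdots\otimes K_{\eps_k})\,t=v^*$, invert the Kronecker product, and observe that the only substantive issue is whether the resulting $t$ lies in $[0,1]^{2^k}$. Where you diverge is in how that feasibility is certified. The paper writes $u'_i=q'+(1-2q')\langle r_i,e_1\rangle$, identifies the two extremal entries of the first column of $K_{\eps_1}^{-1}\otimes\cdots\otimes K_{\eps_k}^{-1}$ (the all-positive product and the one containing the single most negative factor $-q_j$ with $j=\arg\max_j q_j$), turns each into an interval constraint on $q'$, and shows the binding constraint is attained exactly at $q^*$; this simultaneously proves feasibility and that $\eps^*$ is the \emph{largest} achievable value, which is what Remark~\ref{remark:k-way-merge} relies on. You instead push $v^*$ through the Hadamard diagonalization to get the fully explicit formula $t_b=\bigl(A-(-1)^{|b|}\prod_{i:b_i=0}e^{\eps_i}\bigr)/(2A-B)$ — I checked this against the $k=2$ case and it is right, e.g.\ it correctly gives $t_{0\cdots0}=0$ — and then verify the bounds directly; your identification of the worst case ($|b|=1$ with the surviving index minimizing $\eps_j$, reducing to $\prod_i(1-e^{-\eps_i})\le 1-e^{-\eps_j}$) matches the paper's binding constraint. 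Your closed form buys explicit merge probabilities for implementation, and your alternative induction-plus-uniqueness argument (compose the $k$-way merge with the pairwise merge; the composite is a genuine stochastic map solving the same invertible system, hence equals $t$) is an elegant way to get feasibility with no computation at all; what it gives up relative to the paper is the maximality of $\eps^*$, which you only assert in passing rather than derive.
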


Before proving Theorem~\ref{thm:sym-merge-general}, we provide another fact that will be used in the proof.

\begin{fact}
\label{thm:row-sums-one}
Let $A_{m_A \times n_A}, B_{m_B \times n_B}$ be matrices satisfying $A \mathbf 1 = \mathbf 1$ and $B \mathbf 1 = \mathbf 1$. Then $(AB) \mathbf 1 = \mathbf 1$ and $(A \otimes B) \mathbf 1 = \mathbf 1$. Additionally, if $A^{-1}$ exists, then $A^{-1} \mathbf 1 = \mathbf 1$.
\end{fact}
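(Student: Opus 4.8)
The plan is to verify the three claims by direct manipulation, using only associativity of matrix multiplication and the mixed-product property of the Kronecker product, $(A \otimes B)(C \otimes D) = (AC) \otimes (BD)$. Throughout, the main bookkeeping point is to keep track of which ``$\mathbf 1$'' is meant: $A \mathbf 1 = \mathbf 1$ means $A$ sends the all-ones vector of length $n_A$ to the all-ones vector of length $m_A$, and similarly for $B$; compatibility of the composite operations forces relations such as $m_B = n_A$.

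For the ordinary product, I would write $(AB)\mathbf 1 = A(B\mathbf 1)$ by associativity, apply the hypothesis $B\mathbf 1 = \mathbf 1$ to reduce the right-hand side to $A\mathbf 1$, and then apply $A\mathbf 1 = \mathbf 1$. For the Kronecker product, the key observation is that the all-ones vector of length $n_A n_B$ factors as $\mathbf 1 \otimes \mathbf 1$, the Kronecker product of the all-ones vectors of lengths $n_A$ and $n_B$; then the mixed-product property gives $(A \otimes B)(\mathbf 1 \otimes \mathbf 1) = (A\mathbf 1) \otimes (B\mathbf 1) = \mathbf 1 \otimes \mathbf 1$, which is the all-ones vector of length $m_A m_B$. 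For the inverse, assuming $A^{-1}$ exists I would left-multiply the identity $A\mathbf 1 = \mathbf 1$ by $A^{-1}$, obtaining $\mathbf 1 = A^{-1}(A\mathbf 1) = A^{-1}\mathbf 1$.

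There is no real obstacle in this argument; it is entirely a sequence of one-line identities. The only care needed is the dimensional bookkeeping noted above and recalling the factorization $\mathbf 1_{n_A n_B} = \mathbf 1_{n_A} \otimes \mathbf 1_{n_B}$, which is exactly what lets the Kronecker case drop out of the mixed-product rule rather than requiring an entrywise computation.
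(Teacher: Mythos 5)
Your proposal is correct and matches the paper's proof essentially step for step: associativity for the product case, the factorization $\mathbf 1 = \mathbf 1 \otimes \mathbf 1$ together with the mixed-product property for the Kronecker case, and left-multiplication by $A^{-1}$ for the inverse case. Your dimensional bookkeeping is, if anything, slightly more careful than the paper's, which writes the subscripts loosely.
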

\begin{proof}
Since $A \mathbf 1 = \mathbf 1$ and $B \mathbf 1 = \mathbf 1$, we must have $(AB) \mathbf 1 = A (B \mathbf 1) = A \mathbf 1 = \mathbf 1$.

Next, write $\mathbf 1_{m_A m_B} = \mathbf 1_{m_A} \otimes \mathbf 1_{m_B}$. Then
\[
\begin{aligned}
(A \otimes B) \mathbf 1_{m_A m_B} &= (A \otimes B) (\mathbf 1_{m_A} \otimes \mathbf 1_{m_B}) \\
    &= (A \mathbf 1_{m_A}) \otimes (B \mathbf 1_{m_B}) \\
    &= \mathbf 1_{m_A} \otimes \mathbf 1_{m_B} \\
    &= \mathbf 1_{m_A m_B}.
\end{aligned}
\]

Finally, in the case when $A$ is invertible, $A^{-1} A = I$, and so we must have $A^{-1} \mathbf 1 = A^{-1} A \mathbf 1 = I \mathbf 1 = \mathbf 1$.
\end{proof}

\begin{proof}[Proof of Theorem~\ref{thm:sym-merge-general}]
Since all operations are performed bitwise and independently, assume without loss of generality that the bit vectors $x_i$ are scalar, i.e., $x_1, \dots, x_k \in \{0, 1\}$.

The fundamental idea is to model the chain of operations performed on the bits $x_i$ as a Markov chain. 
This involves three different types of transition probability matrices. 
The matrices $K_\eps$ defined in the theorem statement map the state space of a single bit in $\{0, 1\}$ to another bit in $\{0, 1\}$ via the application of $\Msym_\eps$. 
Next, we define $K^{\operatorname{or}}$ to be the $2^k \times 2$ matrix mapping $k$ bits in $\{0, 1\}^k$ to a single bit in $\{0, 1\}$ via an \textit{or} operation. Finally, we define $K^{\merge}$ to be the $2^k \times 2$ matrix corresponding to our desired merge operation, which maps $\{0, 1\}^k$ to $\{0, 1\}$.

Since the bit-flipping operations of $K_\eps$ are performed independently, the matrix $K_{\eps_1} \otimes \dots \otimes K_{\eps_k}$ represents the $2^k \times 2^k$ matrix jointly mapping the state space of the original bits $\{x_i\}_{i=1}^k$ to $\{\Msym_{\eps_i}(x_i)\}_{i=1}^k$. Thus we wish to solve:
\begin{equation}
\label{eq:transition-matrix-eq}
(K_{\eps_1} \otimes \dots \otimes K_{\eps_k}) K^{\operatorname{merge}}_{\eps'} = K^{\operatorname{or}} K_{\eps'},
\end{equation}
where $\eps'$ is a free parameter that we will fix, and $K^{\merge}_{\eps'}$ is the unknown quantity. We proceed by solving the matrix equation above, finding the maximum $\eps'$ for which $K^{\merge}_{\eps'}$ represents a valid transition probability matrix.

Let $q_i = q(\eps_i), q' = q(\eps')$. We note that $K_{\eps_i}$ is invertible and write $K_{\eps_i}^{-1}$ as follows:
\[
K_{\eps_i}^{-1} = \frac{1}{1 - 2 q_i} \begin{bmatrix}
    1 - q_i & -q_i \\
    -q_i & 1 - q_i
\end{bmatrix}.
\]
So we may solve our matrix equation (\ref{eq:transition-matrix-eq}) by left-multiplication of $(K_{\eps_1} \otimes \dots \otimes K_{\eps_k})^{-1}$:
\[
K^{\merge}_{\eps'} = (K_{\eps_1}^{-1} \otimes \dots \otimes K_{\eps_k}^{-1}) K^{\operatorname{or}} K_{\eps'} .
\]
The first column of $K^{\operatorname{or}} K_{\eps'}$ is equal to
\[
w' = (1-q', q', \dots, q')^T .
\]
It follows from Fact~\ref{thm:row-sums-one} that $K^{\merge}_{\eps'} \mathbf 1 = \mathbf 1$ and therefore $K^{\merge}_{\eps'}$ is stochastic if and only if:
\[
u' = (K_{\eps_1}^{-1} \otimes \dots \otimes K_{\eps_k}^{-1}) \, w' \in [0, 1]^{2^k} .
\]
We may write $u'_i$, the $i$-th entry of $u'$, as the inner product of $w'$ with the $i$-th row of $(K_{\eps_1}^{-1} \otimes \dots \otimes K_{\eps_k}^{-1})$. We denote by $r_i$ this row vector. Writing $w' = q' \mathbf 1 + (1 - 2q') e_1$, $u'_i$ may be written:
\[
\begin{aligned}
u'_i &= \langle r_i, \; w' \rangle \\
    &= \langle r_i, \; q' \mathbf 1 + (1 - 2q') e_1 \rangle \\
    &= q' \langle r_i, \; \mathbf 1 \rangle + (1 - 2q') \langle r_i, e_1 \rangle \\
    &= q' + (1 - 2q') (K_{\eps_1}^{-1} \otimes \dots \otimes K_{\eps_k}^{-1})_{i1},
\end{aligned}
\]
where the final equality comes from the fact that $\langle r_i, \mathbf 1 \rangle = 1$ (Fact~\ref{thm:row-sums-one}) and that $\langle r_i, e_1 \rangle$ is equal to the $(i, 1)$-th entry of $K_{\eps_1}^{-1} \otimes \dots \otimes K_{\eps_k}^{-1}$.

Since each $u'_i$ entry must be in $[0, 1]$, each entry defines a constraint on $q'$. In particular, since these values are affine functions of $q'$, each constraint corresponds to an interval. We can see that $q' = \frac 12$ is valid for each of these constraints, as $u'_i = \frac 12$ when $q' = \frac 12$ for all $i$. Thus it suffices to find a lower bound for $q'$ using these constraints.

We ask next which values of $u'_i$ are most extreme. For any fixed choice of $q' \in [0, \frac 12]$, we obtain the largest entry $u'_i$ where $(K_{\eps_1}^{-1} \otimes \dots \otimes K_{\eps_k}^{-1})_{i1}$ is maximized and the smallest where that same value is minimized. In particular:
\[
\max_i (K_{\eps_1}^{-1} \otimes \dots \otimes K_{\eps_k}^{-1})_{i1} = \left( \prod_i (1 - 2q_i)^{-1} \right) \prod_i (1 - q_i),
\]
since $1 - q_i > q_i > 0$ for all $i$. (The constant $\prod_i (1 - 2q_i)^{-1}$ appears in all entries.) Similarly, we can see that:
\[
\min_i (K_{\eps_1}^{-1} \otimes \dots \otimes K_{\eps_k}^{-1})_{i1} = \left( \prod_i (1 - 2q_i)^{-1} \right) (-q_j) \prod_{i \neq j} (1 - q_i),
\]
where $j = \arg \max_j q_j$, since this yields the most extreme negative term.

It suffices to constrain the two most extreme entries of $u'$ to $[0, 1]$. The constraints defined by these two entries are:
\[
q' + (1 - 2q') \underbrace{\frac{\prod_i (1 - q_i)}{\prod_i (1 - 2q_i)}}_{=:c_1} \leq 1 \iff q' \overset{(a)}{\geq} \frac{c_1 - 1}{2c_1 - 1},
\]
and
\[
q' + (1 - 2q') \underbrace{(-q_j) \frac{\prod_{i \neq j}(1 - q_i)}{\prod_i(1 - 2q_i)}}_{=:-c_2} \geq 0 \iff q' \overset{(b)}{\geq} \frac{c_2}{1 + 2c_2} .
\]

We note that $c_1 > 1$ and $c_2 > 0$. Moreover, comparing $c_1$ and $c_2$, we find that $c_1 > c_2 + 1$, as:
\[
\begin{aligned}
c_1 - (c_2 + 1) &= \frac{\prod_i (1 - q_i)}{\prod_i (1 - 2q_i)} - \frac{q_j \prod_{i \neq j} (1 - q_i) + \prod_i (1 - 2q_i)}{\prod_i (1 - 2q_i)} \\
    &= \frac{(1 - q_j) \prod_{i \neq j} (1 - q_i)}{\prod_i (1 - 2q_i)} - \frac{q_j \prod_{i \neq j} (1 - q_i) + \prod_i (1 - 2q_i)}{\prod_i (1 - 2q_i)} \\
    &= \frac{(1 - 2q_j) \prod_{i \neq j} (1 - q_i) - \prod_i (1 - 2q_i)}{\prod_i (1 - 2q_i)} \\
    &> 0.
\end{aligned}
\]
So it follows that:
\[
\begin{aligned}
\frac{c_1 - 1}{2c_1 - 1} - \frac{c_2}{1 + 2c_2} &= \frac{(c_1 - 1)(1 + 2 c_2) - c_2 (2c_1 - 1)}{(2c_1 - 1)(1 + 2c_2)} \\
    &= \frac{c_2 - (c_1 + 1)}{(2c_1 - 1)(1 + 2c_2)} \\
    &> 0,
\end{aligned}
\]
which indicates that constraint $(a)$ implies constraint $(b)$. 

We denote by $q^*$ the minimal $q'$ allowed under constraint $(a)$:
\[
q^* = \frac{c_1 - 1}{2c_1 - 1} = \frac{\prod_i \frac{1-q_i}{1-2q_i} - 1}{2\prod_i \frac{1-q_i}{1-2q_i} - 1} .
\]
Putting this in terms of $\eps$, note that $1 - q_i = \frac{e^{\eps_i}}{e^{\eps_i}+1}$, $1 - 2q_i = \frac{e^{\eps_i} - 1}{e^{\eps_i}+1}$, so:
\[
\frac{1-q_i}{1-2q_1} = \frac{e^{\eps_i}}{e^{\eps_i}-1} = (1 - e^{-\eps_i})^{-1},
\]
and hence:
\[
q^* = \frac{\prod_i (1 - e^{-\eps_i})^{-1} - 1}{2\prod_i (1 - e^{-\eps_i})^{-1} - 1} = \frac{1 - \prod_i (1 - e^{-\eps_i})}{2 - \prod_i (1 - e^{-\eps_i})} ,
\]
which gives a final $\eps^*$ of:
\[
\begin{aligned}
\eps^* &= q^{-1}(q^*) \\
    &= \log \left( (q^*)^{-1} - 1 \right) \\
    &= \log \left( \frac{2 - \prod_i (1 - e^{-\eps_i})}{1 - \prod_i (1 - e^{-\eps_i})} - 1\right) \\
    &= \log \left( \frac{1}{1 - \prod_i (1 - e^{-\eps_i})} \right) \\
    &= -\log \left( 1 - \prod_{i=1}^k (1 - e^{-\eps_i}) \right) .
\end{aligned}
\]
Finally, to translate from transition probability matrices back to the theorem statement, note that $K^{\merge}_{\eps^*} = (K_{\eps_1}^{-1} \otimes \dots \otimes K_{\eps_k}^{-1}) K^{\operatorname{or}} K_{\eps^*}$ maps the $\{0, 1\}^k$ state space to $\{0, 1\}$---i.e., the $2^k$ possible inputs map to Bernoulli random variables with probabilities taken from the second column of $K^{\merge}_{\eps^*}$. It follows from the preceding discussion that this vector is precisely $(K_{\eps_1}^{-1} \otimes \dots \otimes K_{\eps_k}^{-1}) v^*$.
\end{proof}

\subsection{Proofs of Results for Section~\ref{sec:sketches}}
\label{sec:sketch-proofs}

\begin{proof}[Proof of Theorem~\ref{thm:shape-of-loglik}]
Let $f(\hat n) = \E[\ell(\hat n; T)]$. 
We will use the notation $\E_n[\cdot]$ to denote expectation under a cardinality of $n$, while abusing notation with $\E_{\hat n}[\cdot]$ to denote the equivalent quantity with $\hat n$ replacing $n$, as below:
\[
\begin{aligned}
\E_n[T_{ij}] &= \Pr(T_{ij} = 1) \\
    &= p (1 - \gamma_j^n) + q \gamma_j^n \\
    &= p - (p-q) \gamma_j^n \\
\E_{\hat n}[T_{ij}] &= p - (p-q) \gamma_j^{\hat n} .
\end{aligned}
\]
(Note that $\E_{\hat n}[\cdot]$ is not truly an expectation, since when $\hat n$ is non-integer, the distribution of $T$ is not defined.) Observe that:
\[
\begin{aligned}
\E_n[f'(\hat n)] &= B \sum_{j=1}^P (1 - \E_n[T_{ij}]) (p-q)\gamma_j^{\hat n} \log(\gamma_j) (1 - \E_{\hat n}[T_{ij}])^{-1} \\
&\quad\quad - B \sum_{j=1}^P \E_n[T_{ij}] (p-q)\gamma_j^{\hat n} \log(\gamma_j) (\E_{\hat n}[T_{ij}])^{-1} \\
&= B(p-q) \sum_{j=1}^P \phi'_j(\hat n) ,
\end{aligned}
\]
where
\[
\phi'_j(\hat n) = \gamma_j^{\hat n} \log(\gamma_j) \left( \frac{1 - \E_n[T_{ij}]}{1 - \E_{\hat n}[T_{ij}]} - \frac{\E_n[T_{ij}]}{\E_{\hat n}[T_{ij}]} \right) .
\]
As expected, this equals zero when $\hat n = n$. Moreover, it is strictly positive for $\hat n < n$ and strictly negative for $\hat n > n$. Thus the same properties hold for $f'(\hat n)$, and so $n$ is the global maximizer of $f$.

By similar logic, we may write
\[
\E_n[f''(\hat n)] = B(p-q) \sum_{j=1}^P \phi''_j(\hat n) ,
\]
where
\[
\phi''_j(\hat n) = (\log\gamma_j)^2 \gamma_j^{\hat n} \left( (1-p) \frac{1 - \E_n[T_{ij}]}{(1 - \E_{\hat n}[T_{ij}])^2} - p \frac{\E_n[T_{ij}]}{(\E_{\hat n}[T_{ij}])^2} \right) .
\]
Although $\phi''_j(\hat n) > 0$ for sufficiently large $\hat n$, note that the parenthetical quantity is monotonically increasing in $\hat n$. Moreover, we know $\phi''_j(n) < 0$ since $\phi'_j(n)$ corresponds to a maximum. Thus the parenthetical (and indeed all of $\phi''_j(\hat n)$) must be negative for $\hat n \leq n$. Since this is true for all $j$, it follows that $f''(\hat n) < 0$ for $\hat n \leq n$.
\end{proof}

\begin{lemma}
\label{lem:constant update probability}
Consider a bucket in a PCSA summary with $v$ items where the bucket has $P=\infty$ bits. The probability that a new item allocated to the bucket modifies the bucket is bounded by $c / v$ for all $v > v_0$ 
for some constants $c, v_0$.
\end{lemma}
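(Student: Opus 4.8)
The plan is to condition on the set of geometric hash values $h_2(x)$ already present among the $v$ items in the bucket. A new item modifies the bucket precisely when its own value $\min\{P, h_2\} = h_2$ (since $P = \infty$) exceeds the current maximum value stored in the bucket, i.e.\ when the new item sets a bit that is not already set. So I would let $M$ be the maximum of the $v$ i.i.d.\ $\geometric(1/2)$ values already in the bucket, and compute
\[
\Pr(\text{update}) = \E\big[ \Pr(h_2 > M \mid M) \big] = \E\big[ 2^{-M} \big],
\]
using $\Pr(\geometric(1/2) > m) = 2^{-m}$ and independence of the new item's hash from the existing ones. (One must be slightly careful about whether "modifies" also counts setting a bit strictly below the max that happens to be unset; but for the purpose of an upper bound it suffices to bound the probability that $h_2$ exceeds $M$, and in fact the event of setting \emph{any} new bit is contained in nothing larger than "the new bit is at a level not occupied by the $v$ items," which I will also bound below.)

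First I would get a clean upper bound on $\E[2^{-M}]$. Writing $q_m = \Pr(M = m)$, we have $\E[2^{-M}] = \sum_{m \ge 1} 2^{-m} q_m$. Since $M$ concentrates around $\log_2 v$, the dominant contribution is $2^{-\log_2 v} = 1/v$, so I expect $\E[2^{-M}] = \Theta(1/v)$, which already gives the claimed $c/v$ bound. To make this rigorous without fuss, I would split the sum at $m_0 = \lfloor \log_2 v \rfloor$: for $m \le m_0$, bound $q_m \le \Pr(M \le m_0) = (1 - 2^{-m_0})^v \le (1-1/v)^v \le e^{-1}$... actually more usefully, use $\sum_{m \le m_0} 2^{-m} q_m \le \Pr(M \le m_0) \cdot \max_{m\le m_0} 2^{-m}$ is too lossy; instead bound $\sum_{m \le m_0} 2^{-m} q_m \le 2^{-m_0}\sum_m q_m \le 2/v$ is wrong since $2^{-m}$ is largest at small $m$. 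The correct easy bound: for the low tail use $\sum_{m < m_0} 2^{-m} q_m \le \sum_{m < m_0} 2^{-m}\Pr(M < m_0)$; and $\Pr(M < m_0) = (1-2^{-(m_0-1)})^v \le \exp(-v 2^{-(m_0-1)}) \le \exp(-1)$ roughly, times $\sum 2^{-m} \le 2^{-(m_0-1)} \cdot 2 \approx 8/v$. For the high tail, $\sum_{m \ge m_0} 2^{-m} q_m \le \sum_{m\ge m_0} 2^{-m} = 2^{-(m_0-1)} \le 4/v$. Combining gives $\E[2^{-M}] \le c/v$ for an absolute constant $c$ and all $v \ge v_0$.

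Alternatively—and this is probably the cleanest route, avoiding the tail split entirely—I would compute $\E[2^{-M}]$ exactly via the inclusion over levels: $\Pr(h_2 > M) = \sum_{m\ge 1} 2^{-m}\big[(1-2^{-m})^v - (1-2^{-(m-1)})^v\big]$, or even more directly, note $\Pr(\text{the new item's level exceeds all }v\text{ existing levels}) = \sum_{m \ge 1} \Pr(h_2 = m)\Pr(\text{all }v\text{ existing} < m) = \sum_{m\ge1} 2^{-m}(1-2^{-(m-1)})^v$, and bound each factor $(1-2^{-(m-1)})^v \le \exp(-v 2^{-(m-1)})$, so the sum is at most $\sum_{m\ge 1} 2^{-m}\exp(-v 2^{-(m-1)})$; substituting $u = 2^{-(m-1)}$ this is a Riemann-type sum comparable to $\int_0^{\infty} \frac{u}{u}\,e^{-vu}\,\frac{du}{u\log 2}\cdot(\text{const})$, which is $\Theta(1/v)$. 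I expect the main (minor) obstacle to be bookkeeping: making sure "modifies the bucket" is correctly identified with "sets a previously-unset bit" and that the contribution from setting a \emph{low} unset bit (not just exceeding the max) is also $O(1/v)$—but since the probability the new bit lands at level $m$ at all is $2^{-m}$ and is only relevant when that level is empty, summing $2^{-m}$ times the probability level $m$ is empty, $\le e^{-v 2^{-m}}$, gives the same $\Theta(1/v)$ bound, so nothing essential changes. The genuinely "hard" content is merely the elementary estimate that $\sum_{m\ge 1} 2^{-m} e^{-v 2^{-m}} = O(1/v)$, which follows by splitting at $m \approx \log_2 v$ as above.
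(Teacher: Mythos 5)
Your final route is the paper's proof: the update probability is exactly $\sum_{m\ge1}2^{-m}(1-2^{-m})^{v}$ (the new item lands at level $m$ with probability $2^{-m}$ and changes the sketch iff none of the $v$ existing items occupies that level), and one bounds $(1-2^{-m})^{v}\le e^{-v2^{-m}}$ and splits the sum near $m\approx\log_2 v$. The paper does precisely this, using three ranges and an integral comparison in the middle range. Your instinct to discard the ``exceeds the current maximum'' formulation is also correct---that event is strictly contained in ``sets a previously unset bit,'' so $\E[2^{-M}]$ is only a lower bound on the quantity the lemma concerns---and you do end up writing down the correct sum.

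However, the one piece of the estimate you work out in detail does not close as written. For the low tail $m<m_0=\lfloor\log_2 v\rfloor$ you factor out $\Pr(M<m_0)\le e^{-1}$ and then assert $\sum_{m<m_0}2^{-m}\le 2^{-(m_0-1)}\cdot2\approx 8/v$. That geometric series runs over the \emph{small} values of $m$, so it equals $1-2^{-(m_0-1)}\approx1$; the quantity $2^{-(m_0-1)}\cdot 2$ is the tail sum for $m\ge m_0-1$. As written, the low tail is therefore only bounded by a constant, not by $O(1/v)$. The repair is to keep the $m$-dependence of the exponential factor rather than replacing it wholesale by $e^{-1}$: writing $w_m=v2^{-m}$, the low-tail terms are $\tfrac1v\,w_m e^{-w_m}$ with $w_m\ge1$ doubling as $m$ decreases, and since $we^{-w}$ is decreasing for $w\ge1$ the low tail is at most $\tfrac1v\sum_{j\ge0}2^{j}e^{-2^{j}}=O(1/v)$; equivalently, compare to $\int 2^{-x}e^{-v2^{-x}}\,dx=O(1/v)$, which is exactly the paper's middle-range computation. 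With that one step fixed, your argument is complete and coincides with the paper's.
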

\begin{proof}
The probability a bucket containing $v$ items is modified by a new item allocated to the bucket is $\sum_{\ell=1}^\infty 2^{-\ell} \left(1-2^{-\ell}\right)^v$.
Split this sum into the ranges $\ell \in \ical_1 := (0, \log_2 v - \log_2 \log_2 v)$,
$\ell \in \ical_2 := [\log_2 v - \log_2 \log_2 v, \log_2 v)$, $\ell \in \ical_3 := [\log_2 v, \infty)$.
Since $2^{-\ell} \leq 1$ and $(1-2^{-\ell})^v < \exp(-2^{-\ell} v) \leq 1$,
\begin{align*}
\sum_{\ell \in \ical_1} 2^{-\ell} \left(1-2^{-\ell}\right)^v
&\leq \log_2 v \exp(-\log_2 v) = o(1), \\ 
\sum_{\ell \in \ical_2} 2^{-\ell} \left(1-2^{-\ell}\right)^v
&\leq \int_{\log_2 v - \log_2 \log_2 v}^{\log_2 v} 2^{-x} \exp(-2^{-x} v) dx \\
&= \frac{1}{v \log 2} \exp(-v 2^{-x}) \Big|_{\log_2 v - \log_2 \log_2 v}^{\log_2 v} \\
&= \frac{e^{-1}}{v \log(2)} - O(\exp(-\log_2 v)/v), \\
\sum_{\ell \in \ical_3} 2^{-\ell} \left(1-2^{-\ell}\right)^v
& \leq \sum_{\ell \in \ical_3} 2^{-\ell} \leq \frac{1}{v} .
\end{align*}
Summing these components gives the desired result.
\end{proof}

\begin{proof}[Proof of Theorem~\ref{thm:sketch convergence in probability}]
The modified PCSA summary $\overline{S}_n$ can be generated in the following way. Draw a new cardinality $\overline{N} \sim \poisson(n)$.
The first $\min \{n,\overline{N}\}$ items are shared for the regular SFM summary $S_n$ and modified summary $\overline{S}_n$. Each of these items are allocated to the same bucket for both summaries. Denote the remaining items by $R = |\overline{N} - n|$. 
The variance of a $\poisson(n)$ gives  $R = O_p(\sqrt{n})$.

Using Theorem~3 in \citet{kolchin1978random} for the asymptotic distribution of the maximum value in a multinomial vector,
the bucket allocated the largest number of remaining items has $O_p(R/B \log B) = O_p(\sqrt{n} / B \log B)$ items. Likewise the bucket with the minimum number of items has $n/B + O_p(\sqrt{n} / B \log B)$ items. 
By Lemma~\ref{lem:constant update probability}, the probability a new item in a bucket will update the bucket's value is $O(1/V_i)$ given $V_i$, the number of items already in the bucket.
Thus, the probability that no bucket is updated by one of the remaining items is 
\begin{align*}
\left(1- \frac{O(\sqrt{n}/ B \log B)}{n/B + O(\sqrt{n}/ B \log B)}\right)^B + o(1) &= \left(1-\frac{O(\log B)}{\sqrt{n}}\right)^B + o(1)\\
&= (1-o(1/B))^B + o(1) \to 1
\end{align*}
 as $n \to \infty$.
This gives that $\Pr(\overline{S}_n = S_n) \to 1$ 
as $n \to \infty$ and the true PCSA sketch and the Poissonized one are asymptotically equal.

We can also relate the Poissonized PCSA sketch to the one whose true likelihood is the composite marginal likelihood.
By Poisson splitting, the entries of $\overline{S}_n$ are independent. Thus, we can couple the entries of $\overline{S}_n$ with those of $\tilde{S}_n$ via the inverse CDF method by using the same underlying $\uniform(0,1)$ random variables.
The probability that an entry in level $j$ is different across the coupled sketches is 
\begin{align*}
\Pr(\tilde{S}_n(i,j) \neq \overline{S}_n(i,j)) &= 
\exp\left(-\frac{n}{B\, 2^j}\right) - \left(1 - \frac{1}{B\, 2^j}\right)^n \\
&= \exp\left(-\frac{n}{B\, 2^j}\right) \left(1 - \exp\left[n \log\left(1-\frac{1}{B 2^j}\right) +\frac{n}{B 2^j} \right]\right) \\
&< \exp\left(-\frac{n}{B\, 2^j}\right) \left(1 - \exp\left[-\frac{n}{B^2 2^{2j+1}} \right]\right) \\
&< \exp\left(-\frac{n}{B\, 2^j}\right) \frac{n}{B^2 2^{2j+1}} .
\end{align*}

Applying a union bound gives and splitting the sum at some positive integer $k$  gives
\begin{align*}
\Pr(\tilde{S}_n \neq \overline{S}_n) &\leq 
B \sum_{j=1}^\infty \exp\left(-\frac{n}{B\, 2^j}\right) \frac{n}{B^2 2^{2j+1}}  \\
&\leq \sum_{j=1}^{k} \exp\left(-\frac{n}{B\, 2^j}\right) \frac{n}{B} +
\sum_{j=k+1}^{\infty} \frac{n}{B 2^{2j+1}}
\\
&\leq \exp\left(-\frac{n}{B\, 2^k}\right) \frac{n k}{B} +
\frac{n}{B 2^{2k}} .
\end{align*}
Take $k = \log_2 \left(\frac{n /B}{2\log ({n}/{B})}  \right) + \delta$
for some $\delta \in [-1/2,1/2)$.
Then the first part
\begin{align*}
\exp\left(-\frac{n}{B\, 2^k}\right) \frac{n k}{B} &=
\frac{nk}{B} \exp\left(-2^{1-\delta} \log(n/B)  \right) \\
&\leq \frac{nk}{B} (n/B)^{-3/2} \to 0 .
\end{align*}
as $n / B \to \infty$.
Likewise, the second part
\begin{align*}
\frac{n}{B 2^{2k}} &= \frac{n}{B} \left(\frac{n /B}{2\log ({n}/{B})}  \right)^{-2} 2^{-2\delta} = \frac{4\log^2 ({n}/{B})}{n/B} 2^{-2\delta} \to 0
\end{align*}
as $n / B \to \infty$.
Thus $\Pr(\tilde{S}_n = \overline{S}_n) \to 1$ as $n \to \infty$ as well.

\end{proof}

\begin{proof}[Proof of Corollary \ref{cor:asymptotic efficiency}]
Since both a PCSA sketch $S_n$ and modified sketch with independent bins $\tilde{S}_n$ are equal with probability going to 1 as $n \to \infty$,
the private SFM sketches $T_n$, $\tilde{T}_n$ obtained by applying the same randomized response noise to them are also equal with probability going to 1.
Let $\hat{\theta}(T_n)$ be some cardinality estimator and $V(\hat{\theta}(T_n))$ denote its asymptotic variance.
Then $\min_{\hat{\theta} \in \Theta} V(\hat{\theta}(T_n)) = \min_{\hat{\theta} \in \Theta} V(\hat{\theta}(\tilde{T}_n))$,
and a cardinality estimator for $T_n$ is asymptotically efficient if and only if it is asymptotically efficient for $\tilde{T}_n$. 
Since the composite likelihood estimator for $T_n$ is the true maximum likelihood estimator for $\tilde{T}_n$, it is asymptotically efficient. 
\end{proof}

\section{Proofs and Results for General Boolean Operations}
\label{sec:and-merging-appendix}

In the main text, we concern ourselves primarily with merge operations under randomized response that emulate the logical operator \textit{or} ($\lor$).
Here we discuss generalizations of these merge operations to other Boolean operations.

\subsection{Boolean Operations under Symmetric Randomized Response}

Recall that a merge for \textit{and} ($\land$) under $\Msym$ was presented in Corollary~\ref{thm:and-sym-rr-merge}.

\begin{proof}[Proof of Corollary~\ref{thm:and-sym-rr-merge}]
We prove this in the more general setting of merging $k$ bits, as in Theorem~\ref{thm:sym-merge-general}. Indeed, proving this is essentially equivalent to Theorem~\ref{thm:sym-merge-general}, except that we must replace $K^{\operatorname{or}}$ with a transition matrix $K^{\operatorname{and}}$ that maps $(x_1, \dots, x_k)$ to $1$ only when $x_1=\dots=x_k=1$. Then for fixed $\eps'$ we have a potential solution:
\[
K^{\merge}_{\eps'} = (K_{\eps_1}^{-1} \otimes \dots \otimes K_{\eps_k}^{-1}) K^{\operatorname{and}} K_{\eps'} .
\]
Once again, we seek the largest $\eps'$ for which $K^{\merge}_{\eps'}$ is a valid transition probability matrix. This time, we will use the second column of $K^{\operatorname{and}}K_{\eps'}$ to determine constraints on $\eps'$. (This is allowable since $K^{\operatorname{and}}K_{\eps'} \mathbf 1 = \mathbf 1$.) We write the second column as:
\[
w' = (q',\dots, q', 1-q')^T = q' \mathbf 1 + (1 - 2q') e_{2k} .
\]
Using $w'$ as in the proof of Theorem~\ref{thm:sym-merge-general}, we obtain the same constraints on $\eps'$. Applying the remainder of the proof of Theorem~\ref{thm:sym-merge-general} yields the final result.
\end{proof}

Next, we demonstrate \textit{xor} ($\lxor$) merging under $\Msym$.

\begin{lemma}
\label{thm:xor-msym}
$\Msym_{\eps_1}(x) \lxor \Msym_{\eps_2}(y) \overset D= \Msym_{\eps^*}(x \lxor y)$ for $\eps^* = \log(1 + e^{\eps_1 + \eps_2}) - \log(e^{\eps_1} + e^{\eps_2})$.
\end{lemma}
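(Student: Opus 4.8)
The plan is to proceed exactly as in the proof of Theorem~\ref{thm:xor-rr-merge}: since all operations act bitwise and independently, it suffices to verify the distributional identity for scalar inputs $x, y \in \{0, 1\}$. Recall that under $\Msym_\eps$ each bit is flipped to $1$ with probability $p(\eps) = e^\eps/(e^\eps+1)$ if it was a $1$ and with probability $q(\eps) = 1/(e^\eps+1) = 1 - p(\eps)$ if it was a $0$. So $\Msym_{\eps_i}(b_i) \sim \bernoulli(p(\eps_i))$ when $b_i = 1$ and $\bernoulli(q(\eps_i))$ when $b_i = 0$, and by symmetry of the mechanism we can write $\Msym_{\eps_i}(b_i) \sim \bernoulli(r_i)$ where $r_i \in \{p(\eps_i), q(\eps_i)\}$.

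The key computational step is to apply Fact~\ref{thm:bernoulli-ops}(3): the xor of independent $\bernoulli(r_1)$ and $\bernoulli(r_2)$ variables is $\bernoulli(r_1(1-r_2) + r_2(1-r_1))$. I would then split into cases on $x \lxor y$. When $x \lxor y = 0$ (i.e., $x = y$), we need the resulting parameter to equal $q(\eps^*)$; this happens both when $x = y = 0$ (so $r_1 = q(\eps_1), r_2 = q(\eps_2)$) and when $x = y = 1$ (so $r_1 = p(\eps_1), r_2 = p(\eps_2)$). When $x \lxor y = 1$, we need the parameter to equal $p(\eps^*) = 1 - q(\eps^*)$, which should fall out of the complementary cases $r_1 = p(\eps_1), r_2 = q(\eps_2)$ and vice versa. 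So concretely I would compute $q(\eps_1) q(\eps_2) + $ (its symmetric partner) — wait, more precisely $q(\eps_1)(1 - q(\eps_2)) + q(\eps_2)(1 - q(\eps_1))$ — and $p(\eps_1)(1-p(\eps_2)) + p(\eps_2)(1-p(\eps_1))$, and check both equal $q(\eps^*)$ for the claimed $\eps^*$; and similarly check the mixed cases give $1 - q(\eps^*)$. Since $p(\eps) + q(\eps) = 1$, the case $x = y = 1$ reduces to the $x = y = 0$ case after relabeling, and the two mixed cases reduce to each other, so really only two algebraic identities need checking.

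The one genuinely new ingredient compared to Theorem~\ref{thm:xor-rr-merge} is solving for $\eps^*$. Plugging in $q(\eps_i) = 1/(e^{\eps_i}+1)$, the target value $q(\eps_1)(1-q(\eps_2)) + q(\eps_2)(1-q(\eps_1))$ simplifies to $(e^{\eps_1} + e^{\eps_2})/((e^{\eps_1}+1)(e^{\eps_2}+1))$; setting this equal to $1/(e^{\eps^*}+1)$ and solving gives $e^{\eps^*} + 1 = (e^{\eps_1}+1)(e^{\eps_2}+1)/(e^{\eps_1}+e^{\eps_2}) = (1 + e^{\eps_1} + e^{\eps_2} + e^{\eps_1+\eps_2})/(e^{\eps_1}+e^{\eps_2})$, hence $e^{\eps^*} = (1 + e^{\eps_1+\eps_2})/(e^{\eps_1}+e^{\eps_2})$, matching the stated $\eps^* = \log(1 + e^{\eps_1+\eps_2}) - \log(e^{\eps_1}+e^{\eps_2})$. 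I would also note that since $e^{\eps_1}, e^{\eps_2} > 1$ we have $(e^{\eps_1}-1)(e^{\eps_2}-1) > 0$, i.e. $1 + e^{\eps_1+\eps_2} > e^{\eps_1}+e^{\eps_2}$, so $\eps^* > 0$ and the resulting mechanism is a genuine instance of $\Msym$ with positive budget (unlike $\lor$/$\land$, no inversion of a transition matrix is needed here, so there is no validity constraint to worry about). There is no real obstacle — the only thing to be careful about is bookkeeping the four input cases and confirming that the symmetry $p = 1 - q$ collapses them to the two identities above; everything else is direct substitution.
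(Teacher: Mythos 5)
Your proposal is correct and follows essentially the same route as the paper's proof: reduce to scalar bits, apply Fact~\ref{thm:bernoulli-ops}(3) in the four input cases, note that the symmetry $p(\eps)+q(\eps)=1$ collapses them to a single parameter and its complement, and solve $q(\eps_1)(1-q(\eps_2))+q(\eps_2)(1-q(\eps_1)) = 1/(e^{\eps^*}+1)$ to get $e^{\eps^*} = (1+e^{\eps_1+\eps_2})/(e^{\eps_1}+e^{\eps_2})$. The added remark that $(e^{\eps_1}-1)(e^{\eps_2}-1)>0$ guarantees $\eps^*>0$ is a small bonus not in the paper but changes nothing.
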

\begin{proof}
Let $p_1 = e^{\eps_1} / (e^{\eps_1} + 1), p_2 = e^{\eps_2} / (e^{\eps_2} + 1)$. Using Fact~\ref{thm:bernoulli-ops}, we have that:
\[
\Msym_{\eps_1}(x_i) \lxor \Msym_{\eps_2}(y_i) \sim \bernoulli(\theta_{x_i, y_i}),
\]
where
\[
\begin{aligned}
    \theta_{0,0} = \theta_{1,1} &= p_1 (1 - p_2) + p_2(1 - p_1) \\
    \theta_{0,1} = \theta_{1,0} &= p_1 p_2 + (1 - p_1)(1 - p_2) .
\end{aligned}
\]
Through a bit of algebra, we obtain
\[
\begin{aligned}
\theta_{0,0} = \theta_{1,1} &= \frac{e^{\eps_1}}{(e^{\eps_1} + 1)(e^{\eps_2} + 1)} + \frac{e^{\eps_2}}{(e^{\eps_1} + 1)(e^{\eps_2} + 1)} \\
    &= \frac{e^{\eps_1} + e^{\eps_2}}{e^{\eps_1 + \eps_2} + e^{\eps_1} + e^{\eps_2} + 1} \\
    &= \frac{1}{\frac{1 + e^{\eps_1 + \eps_2}}{e^{\eps_1} + e^{\eps_2}} + 1} ,
\end{aligned}
\]
while
\[
    \theta_{1,0} = \theta_{0,1} = 1 - \theta_{0,0} .
\]
So we have
\[
\Msym_{\eps_1}(x_i) \lxor \Msym_{\eps_2}(y_i) \overset D= \M_{\theta_{1,0}, 1 - \theta_{1,0}}(x_i \lxor y_i) .
\]
Finally, to obtain $\eps^*$, note that $\M_{\theta_{1,0}, 1 - \theta_{1,0}} = \Msym_{\eps^*}$ for
\[
e^{\eps^*} = \frac{1 + e^{\eps_1 + \eps_2}}{e^{\eps_1} + e^{\eps_2}} .
\]
\end{proof}

As our final step in supporting general Boolean operations under $\Msym$, we show that the unary operation \textit{not} ($\lnot$) commutes with $\Msym$.

\begin{lemma}
\label{thm:not-msym}
For any bit vector $x$, we have $\lnot (\M_{p,q}(x)) \overset D= \M_{p,q}(\lnot x)$ if and only if $q = 1-p$. In particular, $\lnot (\Msym_\eps(x)) \overset D= \Msym_\eps(\lnot x)$.
\end{lemma}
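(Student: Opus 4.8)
The plan is to work bit-by-bit, since $\M_{p,q}$ acts independently and identically on each coordinate, so it suffices to prove the claim for a single bit $x \in \{0,1\}$. Fix such an $x$ and consider the random variable $\lnot(\F_{p,q}(x))$. Writing $\lnot$ of a Bernoulli variable: if $Z \sim \bernoulli(r)$, then $\lnot Z \sim \bernoulli(1-r)$. By definition of $\F_{p,q}$, when $x = 1$ we have $\F_{p,q}(x) \sim \bernoulli(p)$, so $\lnot(\F_{p,q}(1)) \sim \bernoulli(1-p)$; and $\F_{p,q}(\lnot 1) = \F_{p,q}(0) \sim \bernoulli(q)$. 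So the identity at $x=1$ requires $1 - p = q$. Symmetrically, when $x = 0$ we have $\lnot(\F_{p,q}(0)) \sim \bernoulli(1-q)$ while $\F_{p,q}(\lnot 0) = \F_{p,q}(1) \sim \bernoulli(p)$, requiring $1 - q = p$, which is the same condition.

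Thus the "only if" direction follows by exhibiting one input where equality fails when $q \neq 1-p$ (either bit value works), and the "if" direction follows by checking that $q = 1-p$ makes both cases (and hence the full vector, by independence across coordinates) match in distribution. I would phrase this compactly: assume $q = 1-p$; then for each coordinate $i$, $\lnot(\F_{p,q}(x_i))$ and $\F_{p,q}(\lnot x_i)$ are both $\bernoulli(1-p)$ when $x_i = 1$ and both $\bernoulli(p)$ when $x_i = 0$, and independence of coordinates gives the vector-level equality $\overset D=$. For the converse, if $\lnot(\M_{p,q}(x)) \overset D= \M_{p,q}(\lnot x)$ for all $x$, restrict to $x = \mathbf{1}_d$ (or even a single coordinate) and read off $1 - p = q$ from the marginal of any coordinate.

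The final sentence is then immediate: $\Msym_\eps$ is by Definition~\ref{def:sym-rr} the mechanism $\F_{p,1-p}$ with $p = e^\eps/(e^\eps+1)$, which has $q = 1-p$ by construction, so it falls in the case where $\lnot$ commutes with the mechanism.

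There is essentially no obstacle here — the only thing to be careful about is stating clearly that the biconditional is a statement about the family of inputs (it must hold for all $x$, or at least for one cleverly chosen $x$ to force the converse), and that the vector-valued conclusion reduces to the scalar computation via the independence built into $\M_{p,q}$. The computation itself is a one-line Bernoulli parameter flip.
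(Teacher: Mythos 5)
Your proof is correct and follows essentially the same route as the paper's: both reduce to a per-coordinate Bernoulli parameter computation showing $\lnot(\F_{p,q}(x_i))$ has parameter $1-p$ or $1-q$ according to $x_i$, and matching this against $\F_{p,q}(\lnot x_i)$ forces $q = 1-p$. The paper phrases this slightly more compactly as the mechanism identity $\lnot(\M_{p,q}(x_i)) \overset{D}{=} \M_{1-q,1-p}(\lnot x_i)$, but the content is identical.
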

\begin{proof}
Note that
\[
\lnot (\M_{p,q}(x_i)) \overset D= \M_{1-p, 1-p}(x_i) \overset D= \M_{1-q,1-p}(\lnot x_i) .
\]
Consequently,
\[
\lnot (\M_{p,q}(x)) \overset D= \M_{p,q}(\lnot x) \iff p = 1 - q \iff q = 1 - p .
\]
\end{proof}

\subsection{Boolean Operations with Deterministic Merging}

In contrast with the results that leveraged randomized merging, we demonstrate that a deterministic merge for a given Boolean operation requires specific choices of randomized response mechanism, precluding the use of general Boolean operations under a single RR mechanism. In particular, we demonstrate that a deterministic \textit{and} ($\land$) merge requires a different privacy mechanism than the \textit{or} ($\lor$) merge described in Theorem~\ref{thm:unique-deterministic-merge}.

\begin{corollary}
\label{thm:and-unique-deterministic-merge}
Let $f_1 = \F_{p_1,q_1}, f_2 = \F_{p_2,q_2}, f_3 = \F_{p_3,q_3}$, and let $\bullet : \{0, 1\}^2 \to \{0, 1\}$ denote a deterministic and symmetric operation. 
The following conditions may only be satisfied simultaneously if $\circ = \lxor$ and $q_1 = q_2 = 1/2$:
\begin{enumerate}
    \item $f_1, f_2$ are (respectively) $\eps_1$-DP and $\eps_2$-DP for $\eps_1, \eps_2 < \infty$.
    \item $f_1(x) \bullet f_2(y) \overset D= f_3(x \land y)$.
    \item $f_i(0) \overset D\neq f_i(1)$ for $i = 1, 2, 3$.
\end{enumerate}
\end{corollary}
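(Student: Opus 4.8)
The plan is to deduce Corollary~\ref{thm:and-unique-deterministic-merge} from Theorem~\ref{thm:unique-deterministic-merge} by \emph{complementation}, exploiting the De Morgan identity $x \land y = \lnot(\lnot x \lor \lnot y)$. Suppose $f_1 = \F_{p_1,q_1}$, $f_2 = \F_{p_2,q_2}$, $f_3 = \F_{p_3,q_3}$ and the deterministic symmetric operation $\bullet$ satisfy conditions (1)--(3) together with $f_1(x) \bullet f_2(y) \overset D= f_3(x \land y)$. Introduce the complemented mechanisms $g_i(z) := \lnot f_i(\lnot z)$ and the De Morgan dual operation $a \circ b := \lnot(\lnot a \bullet \lnot b)$, which is again deterministic and symmetric. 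As in the proof of Lemma~\ref{thm:not-msym} (applied bitwise), one computes $g_i(0) \sim \bernoulli(1 - p_i)$ and $g_i(1) \sim \bernoulli(1 - q_i)$, i.e.\ $g_i = \F_{1 - q_i,\, 1 - p_i}$.

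The first step is to verify that $(g_1, g_2, g_3, \circ)$ satisfies conditions (1)--(3) of Theorem~\ref{thm:unique-deterministic-merge}, now with the \textit{or}-relation in (2). For (1): $g_i$ is $f_i$ composed with a coordinatewise input complement and an output complement; input-complementation is a bijection of $\{0,1\}^d$ preserving the sensitivity-1 neighbor relation, and output-complementation is post-processing, so $g_i$ inherits $\eps_i$-DP with $\eps_i < \infty$. For (3): $g_i(0) \overset D\neq g_i(1)$ iff $1 - p_i \neq 1 - q_i$ iff $f_i(0) \overset D\neq f_i(1)$, which holds for $i = 1,2,3$. For (2): since $f_3(x \land y) = \lnot g_3(\lnot x \lor \lnot y)$, the hypothesis reads $f_1(x) \bullet f_2(y) \overset D= \lnot g_3(\lnot x \lor \lnot y)$; replacing $x, y$ by $\lnot x, \lnot y$ and applying $\lnot$ to both sides gives $\lnot\bigl(f_1(\lnot x) \bullet f_2(\lnot y)\bigr) \overset D= g_3(x \lor y)$, and the left-hand side is exactly $g_1(x) \circ g_2(y)$ because $f_i(\lnot x) = \lnot g_i(x)$. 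Thus all hypotheses of Theorem~\ref{thm:unique-deterministic-merge} hold for the $g_i$'s and $\circ$.

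The second step applies Theorem~\ref{thm:unique-deterministic-merge}: we conclude $\circ = \lxor$ (modulo complementation of the output, exactly as in that theorem) and that the ``$p$''-parameter of $g_1$ and of $g_2$ equals $1/2$. Translating back, the ``$p$''-parameter of $g_i = \F_{1 - q_i,\,1 - p_i}$ is $1 - q_i$, so $1 - q_1 = 1 - q_2 = 1/2$, giving $q_1 = q_2 = 1/2$; and $\circ = \lxor$ unwinds to $\lnot a \bullet \lnot b = \lnot(a \lxor b)$ for all $a, b$, so $\bullet$ is \textit{xnor}~$= \lnot \lxor$, which is \textit{xor} modulo output complementation. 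Since the De Morgan dual is an involution on symmetric operations fixing the \textit{xor}-class, this is the asserted ``$\bullet = \lxor$.''

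I expect the only real difficulty to be bookkeeping rather than a genuine obstruction: one must track carefully (i) the direction in which complementation swaps the roles of $p_i$ and $1 - q_i$, so that ``$p_1 = p_2 = 1/2$'' for the $g_i$'s becomes ``$q_1 = q_2 = 1/2$'' for the $f_i$'s, and (ii) the ``modulo negation'' convention, since both \textit{xor} and \textit{xnor} legitimately arise here. A point worth double-checking is that the identity $\lnot(\F_{p,q}(\lnot z)) \overset D= \F_{1 - q,\,1 - p}(z)$ used to identify $g_i$ holds for all $p,q$ with no DP assumption, which is indeed what the computation in the proof of Lemma~\ref{thm:not-msym} establishes.
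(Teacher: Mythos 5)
Your proof is correct and follows essentially the same route as the paper's: a De Morgan/complementation reduction of the $\land$-merge condition to the $\lor$-merge condition of Theorem~\ref{thm:unique-deterministic-merge}. The only cosmetic difference is that the paper complements inputs only, using $\F_{p,q}(\lnot x) \overset D= \F_{q,p}(x)$ so that $\bullet$ is unchanged and the reduced mechanisms are $\F_{q_i,p_i}$, whereas you also complement outputs and dualize the operation, obtaining $\F_{1-q_i,\,1-p_i}$ and the xnor bookkeeping; both yield $q_1 = q_2 = 1/2$ and $\bullet = \lxor$ modulo negation.
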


\begin{proof}
Note that this theorem statement is identical to that of Theorem~\ref{thm:unique-deterministic-merge} except that condition (2) has changed from using the \textit{or} operation $\lor$ to the \textit{and} operation $\land$ and that the necessary condition is no longer on $p_1, p_2$ but instead $q_1, q_2$.

Consider the original condition (2) of Theorem~\ref{thm:unique-deterministic-merge}. Since this must hold for all $x, y \in \{0, 1\}$, we may alternatively write this condition in terms of $\lnot x$ and $\lnot y$ instead:
\[
\begin{aligned}
\F_{q_1, p_1}(x) \circ \F_{q_2, p_2}(y) &\overset D= \F_{p_1, q_1}(\lnot x) \circ \F_{p_2, q_2}(\lnot y) \\
    & \overset D= f_1(\lnot x) \circ f_2(\lnot y) \\
    & \overset D= f_3((\lnot x) \lor (\lnot y)) \\
    & \overset D= f_3(\lnot (x \land y)) \\
    & \overset D= \F_{q_3, p_3}(x \land y) .
\end{aligned}
\]
From Theorem~\ref{thm:unique-deterministic-merge}, we know that we can only satisfy this condition simultaneously with (1) and (3) if $p_1 = p_2 = 1/2$. Recognizing that this statement is equivalent to condition (2) of the corollary but with the roles of $p_i$ and $q_i$ swapped, it is apparent that to satisfy (1)--(3) of our corollary, we must have $\bullet = \lxor$ and $q_1 = q_2 = 1/2$.
\end{proof}

From here, it follows that no privacy mechanism $\M_{p,q}$ can satisfy the conditions of Theorem~\ref{thm:unique-deterministic-merge} and Corollary~\ref{thm:and-unique-deterministic-merge} simultaneously.

\begin{proof}[Proof of Corollary~\ref{thm:impossible-deterministic-and-or}]
Assume conditions (1)--(4) are satisfied. By Theorem~\ref{thm:unique-deterministic-merge}, we must have $p_1 = p_2 = 1/2$, while Corollary~\ref{thm:and-unique-deterministic-merge} states that $q_1 = q_2 = 1/2$. This results in a contradiction: $f_1(0) \overset D= \bernoulli(1/2) \overset D= f_1(1)$, violating (3).
\end{proof}

\end{document}